\newif\if@restonecol
\newcommand{\hide}[1]{} %hide
\newcommand{\vpara}[1]{\vspace{0.1in}\noindent\textbf{#1 }}
\newcommand{\para}[1]{\vspace{0.01in}\noindent\textbf{#1 }}
\newcommand{\secref}[1]{Section~\ref{#1}} %section reference
\newcommand{\beq}[1]{\vspace{-0.02in}\begin{equation}#1\end{equation}\vspace{-0.02in}}
\newcommand{\model}{Panther }
\begin{document}

\title{Panther: Fast Top-k Similarity Search in Large Networks}

\numberofauthors{1}
%
% Put no more than the first THREE authors in the \author command

% NOTE: All authors should be on the first page. For instructions
% for more than 3 authors, see:
% http://www.acm.org/sigs/pubs/proceed/sigfaq.htm#a18

\author{
%
% The command \alignauthor (no curly braces needed) should
% precede each author name, affiliation/snail-mail address and
% e-mail address. Additionally, tag each line of
% affiliation/address with \affaddr, and tag the
%% e-mail address with \email.
\alignauthor Jing Zhang$^{\dag}$, Jie Tang$^{\dag\sharp}$, Cong Ma$^{\dag}$, Hanghang Tong$^{\ddag}$, Yu Jing$^{\dag}$, and Juanzi Li$^{\dag}$\\
       \affaddr{$^{\dag}$Department of Computer Science and Technology, Tsinghua University}\\
       $^{\sharp}$Tsinghua National Laboratory for Information Science and Technology (TNList)\\
       \affaddr{$^{\ddag}$School of Computing, Informatics, and Decision Systems Engineering, ASU}\\
       \email{\sf \{zhangjing12, ma-c11\}@mails.tsinghua.edu.cn, \{jietang, yujing5b5d,lijuanzi\}@tsinghua.edu.cn, hanghang.tong@asu.edu }
}

\maketitle
\sloppy

\begin{abstract}
%Measuring similarity between verexes in a network has long been a focus of network mining. It is helpful in numerous applications including search engines and recommender systems.
Estimating similarity between vertices is a fundamental issue in network analysis across various domains, such as social networks and biological networks.
Methods based on common neighbors and structural contexts have received much attention. However, both categories of methods are difficult to scale up to handle large networks (with billions of nodes).
%Two types of similarity measures, neighborhood similarity and structure similarity, have been studied separately.
In this paper, we propose a sampling method that provably and accurately estimates the similarity between vertices. The algorithm is based on a novel idea of \textit{random path}, and an extended method is also presented, to enhance the structural similarity when two vertices are completely disconnected.
We provide theoretical proofs for the error-bound and confidence of the proposed algorithm.

We perform extensive empirical study and show that our algorithm can obtain top-$k$ similar vertices for any vertex in a network approximately 300$\times$ faster than state-of-the-art methods.
We also use identity resolution and structural hole spanner finding, two important applications in social networks, to evaluate the accuracy of the estimated similarities. Our experimental results demonstrate that the proposed algorithm achieves clearly better performance than several alternative methods.

\hide{
find an effective and efficient way to address the problem of finding top-$K$ similar vertices in terms of these two measures. Originally, we propose two methods $PathSim$ and $PathVec$ to correspondingly measure the neighborhood similarity and structure similarity. $PathSim$ is based on the simple assumption that ``two nodes are similar if they frequently appear on the same path'' and PathVec relies on the idea that ``vertices with similar topological characteristics will have similar distributions of tie strengths with other vertices''. Moreover, to deal with the intractability of this kind of calculation of similarity, we adopt a sampling based method to approximate the result. Theoretical guarantees have been made to justify the use of such an approximation algorithm. It further shows that the number of random paths needed to achieve a desired approximation level is independent of the network size. The experiments on various datasets demonstrate the effectiveness and efficiency of our methods.}
\end{abstract}

% A category with only the three required fields
\category{H.3.3}{Information Search and Retrieval}{Text Mining}
\category{J.4}{Social Behavioral Sciences}{Miscellaneous}
\category{H.4.m}{Information Systems}{Miscellaneous}
\terms{Algorithms, Experimentation}

\keywords{Vertex similarity; Social network; Random path}

\section{Introduction}
\label{sec:intro}

Estimating vertex similarity is a fundamental issue in network analysis and also the cornerstone of many data mining algorithms such as clustering, graph matching, and object retrieval.
The problem is also referred to as \textit{structural equivalence} in previous work~\cite{Lorrain:71JMS},   and has been extensively studied in physics, mathematics, and computer science.
In general, there are two basic principles to quantify similarity between vertices.
The first principle is that  two vertices are considered structurally equivalent if they have many common neighbors in a network.
The second principle is that two vertices are considered structurally equivalent if they play the same structural role---this can be further quantified by  degree, closeness centrality,  betweenness, and other network centrality metrics~\cite{Freeman:79SN}.
Quite a few similarity metrics have been developed based on the first principle, e.g., the Jaccard index~\cite{Jaccard:1901} and Cosine similarity~\cite{baeza1999modern}. 
However, they estimate the similarity in a local fashion. Though some work such as SimRank~\cite{jeh:KDD06}, VertexSim~\cite{leicht:PRE2006}, and RoleSim~\cite{jin:KDD11}, use the entire network to compute similarity, they are essentially based on the transitivity of similarity in the network.
There are also a few studies that follow the second principle. For example,
Henderson et al.~\cite{henderson:KDD11} proposed a feature-based method, named ReFeX, to calculate vertex similarity by defining a vector of features for each vertex.

Despite much research on this topic, the problem remains largely unsolved. The first challenge is how to design a unified method to accommodate both principles. This is important, as in many applications, we do not know which principle to follow.
The other challenge is the efficiency issue. Most existing methods have a high computation cost.  SimRank results in a complexity of $O(I|V|^2\bar{d}^2)$, where $|V|$ is the number of vertices in a network; $\bar{d}$ is the average degree of all vertices; $I$ is the number of iterations to perform the SimRank algorithm. It is clearly infeasible to apply SimRank to large-scale networks. For example, in our experiments, when dealing with a network with  500,000 edges, even the fast (top-$k$) version of SimRank~\cite{lee:2012top} requires more than five days to complete the computation for all vertices (as shown in Figure~\ref{subfig:efficienty}).

Thus, our goal in this work is to design a similarity method that is flexible enough to incorporate different structural patterns (features) into the similarity estimation and to quickly estimate vertex similarity in very large networks.

\begin{figure*}
\centering
\mbox{
\subfigure[Top-$k$ similarity search]{\label{subfig:example}
\includegraphics[width=0.27\textwidth]{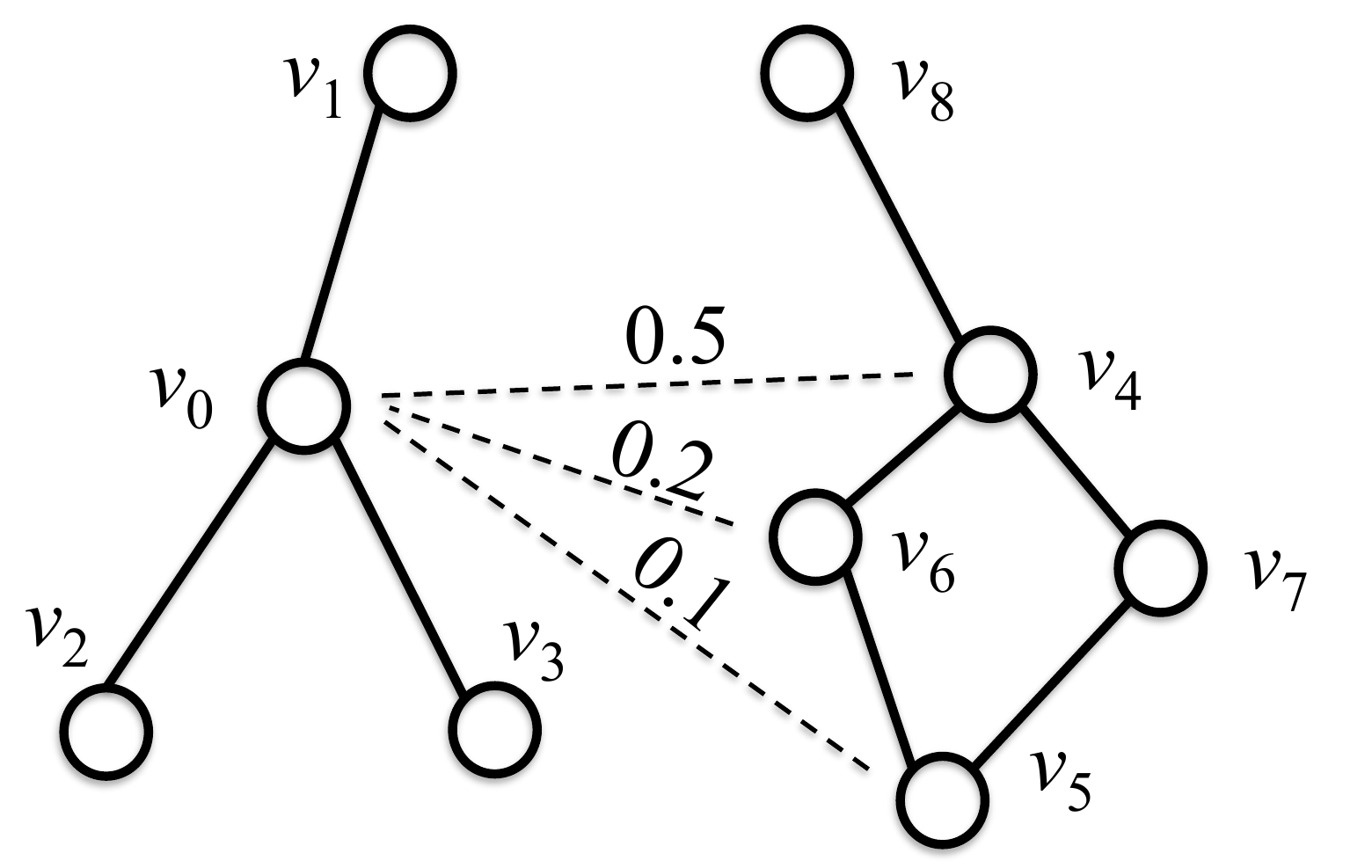}
}
\hspace{0.1in}
\subfigure[Efficiency Performance]{\label{subfig:efficienty}
\includegraphics[width=0.3\textwidth]{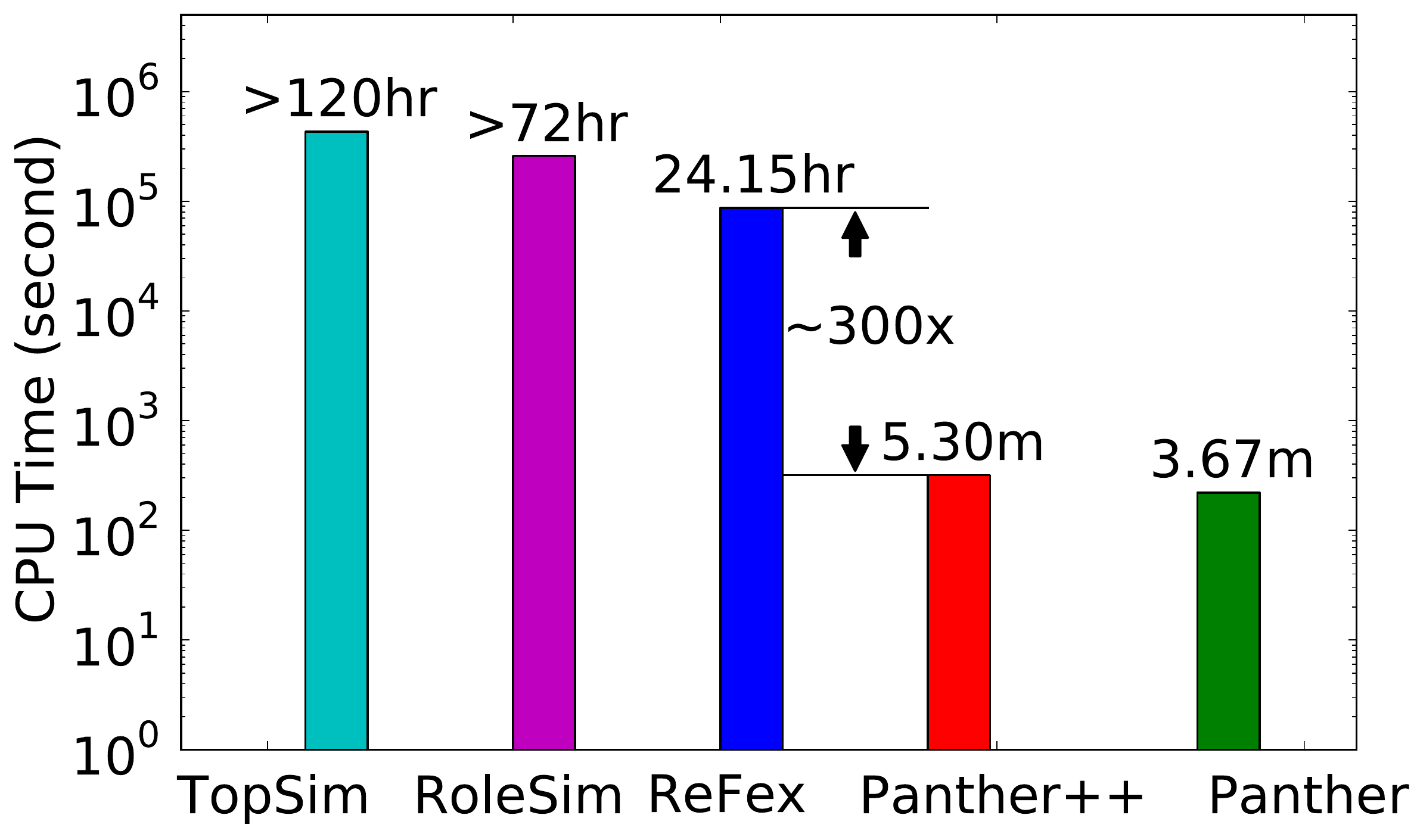}
}
\hspace{0.1in}
\subfigure[Application: Identity Resolution]{\label{subfig:identityresolution}
\includegraphics[width=0.3\textwidth]{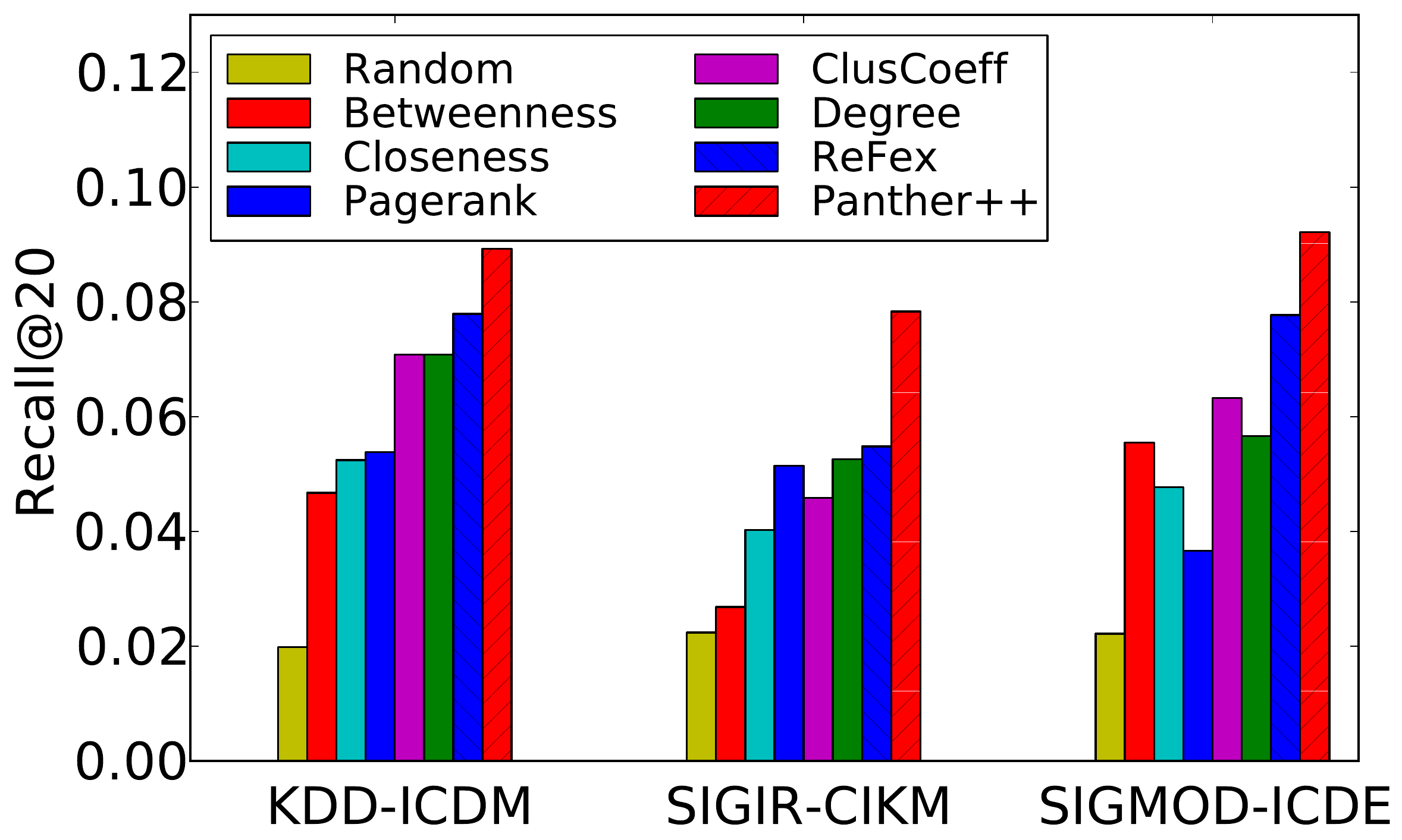}
}
}
%\vspace{-0.1in}
\caption{\label{fig:result} Example of top-$k$ similarity search across networks and performance comparison. (a) Top-$k$ similarity search across two disconnected networks; (b) Efficiency comparison of Panther and several comparison methods on a Tencent subnetwork of 443,070 vertices and 5,000,000 edges; and (c) Accuracy performance when applying Panther++ to identity resolution~\cite{gilpin:2013}, an important application in social network. Please refer to \S~\ref{sec:exp} for definitions of all the comparison methods in (b) and (c).}
\end{figure*}

%\vpara{Our Solution and Contributions.}
We propose a sampling-based method, referred to as Panther, that provably and quickly estimates the similarity between vertices.
The algorithm is based on a novel idea of \textit{random path}. Specifically, given a network, we perform $R$ random walks, each starting from a randomly picked vertex and walking $T$ steps. The idea behind this is that two vertices have a high similarity if they frequently appear on the same paths. We provide theoretical proofs for the error-bound and confidence of the proposed algorithm. Theoretically, we obtain that the sample size, $R = \frac{c}{\varepsilon^2} ( \log_2 \binom{T}{2} +1 +\ln \frac{1}{\delta})$, only depends on the path length $T$ of each random walk, for a given error-bound $\varepsilon$ and confidence level $1-\delta$.
To capture the information of structural patterns, we extend the proposed algorithm by augmenting each vertex with a vector of structure-based features.
The resultant algorithm is referred to as Panther++. Panther++ is not only able to estimate similarity between vertices in a connected network, but also capable of estimating similarity between vertices from disconnected networks. Figure~\ref{subfig:example} shows
an example of top-$k$ similarity search across two disconnected networks, where $v_4$, $v_6$ and $v_5$ are top-3 similar vertices to $v_0$.

We evaluate the efficiency of the methods on a microblogging network from Tencent\footnote{http://t.qq.com}.
Figure~\ref{subfig:efficienty} shows the efficiency comparison of Panther, Panther++, and several other methods.
Clearly, our methods are much faster than the comparison methods.

Panther++ achieves a 300$\times$ speed-up over the fastest comparison method on a Tencent subnetwork  of 443,070 vertices and 5,000,000 edges. Our methods are also scalable. 
Panther is able to return top-$k$ similar vertices for all  vertices in a network with 51,640,620 vertices and 1,000,000,000 edges. On average, it only need 0.0001 second to perform top-$k$ search for each vertex.

We also evaluate the estimation capability of Panther++. Specifically, we use identity resolution and top-$k$ structural hole spanner finding,  two important applications in social networks, to evaluate the accuracy of the estimated similarities. Figure~\ref{subfig:identityresolution} shows the accuracy performance of Panther++ and several alternative methods for identity resolution. Panther++ achieves clearly better performance than several alternative methods.
All codes and datasets used in this paper are publicly available\footnote{https://github.com/yujing5b5d/rdsextr}.

\vpara{Organization}  \secref{sec:problem} formulates the problem. In \secref{sec:approach}, we detail the proposed methods for top-$k$ similarity search, and provide theoretical analysis.
%\secref{sec:proof} gives theoretical justification for the use of sampling based approximation algorithm.
\secref{sec:exp} presents experimental results to validate the efficiency and effectiveness of our methods. \secref{sec:related} reviews the related work. Finally, \secref{sec:conclusion} concludes the paper.

\hide{
Given a vertex in a network, one important question of network mining is to find the vertices that are most similar to it. Measuring the similarity between vertices has numerous applications. For example, in search engines, vertex similarity can help rank webpages in the web network~\cite{BaezaYates:99}. In recommender systems, vertex similarity can be used to group objects, such as similar users or similar items~\cite{Goldberg:1992,Konstan:1997}. Another example is to utilize the similarity to facilitate de-anonymization of people in a social network.

A number of similarity measures have been studied, such as~\cite{henderson:KDD11, jeh:KDD06,jin:KDD11,leicht:PRE2006}. In general, vertex similarity can be divided into two types, which we name as \textit{neighborhood similarity} and \textit{structure similarity}. The first type is a local one which assumes two individuals are similar if they share many common neighbors.
The second type, however, is a global one. It measures the similarity between individuals not based on their direct neighbors, but on their topological structures. This type of similarity measure is also crucial if we want to find people with similar social positions or roles in different communities.
%Clearly, this can not be achieved by simply measure the neighborhood similarity.
As an example, consider the synthetic graph shown in Figure \ref{fig:synthetic}. In terms of neighborhood similarity, vertex $v_0$ is similar to vertices $v_1$, $v_2$ and $v_3$. However, $v_0$ is similar to $v_8$ in terms of structure similarity.

}

\hide{
%and the tie strength with a vertex can be represented by the path similarity value with it.
\begin{figure}[t]
\centering
\includegraphics[width=0.5\textwidth]{Figures/synthetic.pdf}
\caption{\label{fig:synthetic} . A synthetic network.}
\end{figure}

There are two main challenges in this problem:

\vpara{Challenge 1.}
The biggest challenge is its computational efficiency. Existing methods such as SimRank~\cite{jeh:KDD06} and the method proposed in~\cite{leicht:PRE2006} require a time complexity of $O(IN^2d)$, where $I$ is the number of iterations and $d$ is the average degree or the square of the average degree over all vertices. This time complexity is prohibitive when the network becomes larger such as the facebook network, which has over 1 billion vertices.
%Our goal is to find a much more efficient way to solve this problem with provable guarantees on the similarity measure.

\vpara{Challenge 2.}
The second challenge innate to this problem is the definition of structure similarity. RoleSim~\cite{jin:KDD11} incorporates a maximal matching process and the regular SimRank in the hope that this would bias the measure towards structure similarity. It shares the defect of high computational complexity with SimRank. Other methods such as~\cite{henderson:KDD12, henderson:KDD11} deals in an ad-hoc way by defining features recursively for each vertex and then measure the structure similarity based on these features.
%We seek to find a unified way that can employ the neighborhood similarity in measuring the structural similarity.

In this paper, we propose two methods \textit{path similarity}(PathSim)  and \textit{path vector}(PathVec) that correspondingly deal with the neighborhood and structure similarity in a unified framework.

The basic intuition behind PathSim is that ``\textbf{two nodes are similar if they frequently appear on the same path}''. This makes sense if we make an analogy between the path and an information diffusion process. In information diffusion, if two vertices appear in the same diffusion chain, clearly they have similar interests in certain events or objects. As shown in Figure \ref{fig:synthetic}, $v_4$, $v_5$, $v_6$ and $v_7$ frequently co-occur in same paths, they are clearly neighborhood similar.

However, the path similarity is not sufficient to represent the structure similarity between two vertices. Take the synthetic network in Figure \ref{fig:synthetic} as an example. The top similar vertices to vertex $v_0$ is $v_1$, $v_2$, and $v_3$ according to path similarity. Vertex $v_8$ can not be identified although the structures of $v_8$ and $v_0$ are exactly the same. The vertices with long distance or even no connections from the source vertex are easily ignored by path similarity. In terms of Path Vector, we have the idea that \textbf{the probability of a vertex linking to all other vertices are similar if their topology structures are similar}~\cite{holland1981exponential}. We select the top-$k$ path similarities to represent the distribution of tie strength to both improve computational efficiency and capture the most important ties with others. In this way, the structural similarity between two vertices can be measured based on the distance between their path vectors. As shown In Figure \ref{fig:synthetic}, we calculated the path vectors of each vertex. The number outside the parentheses is the path similarity, and the number inside parentheses is the index of the vertex with which the path similarity is calculated. Obviously, the path vectors of $v_8$ and $v_0$ are most similar, and they are indeed structurally similar.

To address the computational problem, we adopt a theoretically-guaranteed sampling based method to approximate the similarity measures. Experiments on both several different datasets have been done to validate the effectiveness and efficiency of our method.

Generally, the contributions of our work include:

\begin{itemize}
\item We novelly define the neighborhood similarity and structural similarity in a unified way. Experiments have shown these similarity measures are general and effective. Our method can indeed find similar vertices in center positions, tight-knit groups, and periphery places in scientific network co-author graph(cf. Figure~\ref{fig:casestudy}).
\item We propose a sampling based method that can approximate the similarities high efficiently. In a network with 1 million vertex and 3 million edges, for the preprocess phase, the required time is less than one minute, and for the on-line query phrase, the required time is millisecond on average (cf. Table~\ref{tb:efficiency}).
\item By adopting the learning theory framework, we give a theoretical justification for the validity of our approximation algorithm, which bridges the gap between formal definition and real practice. These theoretical analyses further guide us to wisely choose parameters used in our algorithm. For instance, given the desired approximation level, the number of random paths is independent of the network size.
\end{itemize}
}

\section{Problem Formulation}
\label{sec:problem}

We first provide necessary definitions and then formally formulate the problem.

\begin{definition}
\textbf{Undirected Weighted Network.} 
Let $G=(V, E, W)$ denotes a network, where $V$ is a set of $|V|$ vertices and $E\subset V \times V$ is a set of $|E|$ edges between vertices. We use $v_i\in V$ to represent a vertex and 
$e_{ij}\in E$ to represent an edge between vertices $v_i$ and $v_j$. Let $W$ be a weight matrix, with each element $w_{ij} \in W$ representing the weight associated with edge $e_{ij}$. 
\end{definition}

We use $\mathcal{N}(v_i)$ to indicate the set of neighboring vertices of vertex $v_i$. 
We leave the study of directed networks to future work.
Our purpose here is to estimate similarity between two vertices, e.g., $v_i$ and $v_j$.
%In this paper, 
We focus on finding top-$k$ similar vertices. Precisely, the problem can be defined as, given a network $G=(V,E,W)$ and a query vertex $v \in V$, how to find a set $X_{v, k}$ of $k$ vertices that have the highest similarities to vertex $v$, where $k$ is a positive integer. 

A straightforward method to address the top-$k$ similarity search problem is to first calculate the similarity $s(v_i, v_j)$ between vertices $v_i$ and $v_j$  using metrics such as the Jaccard index and SimRank, 
%Let $s(v_i, v_j)$ denotes the similarity between vertices $v_i$ and $v_j$.
and then select a set $X_{v, k}$ of $k$ vertices that have the highest similarities to each vertex $v$.
However it is in general difficult to scale up to large networks.
One important idea is to obtain an approximate set  $X^*_{v, k}$ for each vertex. From the accuracy perspective, we aim to minimize the difference between $X^*_{v, k}$ and $X_{v, k}$. 
Formally, we can define the problem studied in this work as follows.

\begin{problem} \textit{Top-$k$ similarity search.}
Given an undirected weighted network $G=(V, E, W)$, a similarity metric $s(.)$, and a positive integer $k$, any vertex $v \in V$, how to quickly and approximately retrieve the top-$k$ similar vertices of $v$? How to guarantee that the difference  between the two sets $X^*_{v, k}$ and $X_{v, k}$ is less than a threshold $\varepsilon \in (0,1)$, i.e., 
%
% the error of the set $Q(v, k)$ that satisfies 
%
\beq{
%1- \frac{|Q^*(v, k) \cap Q(v, k)|}{|Q^*(v, k) \cup Q(v, k)|} \le \epsilon
\text{Diff}(X^*_{v, k}, X_{v, k}) \le \varepsilon \nonumber
}
\noindent with a probability of at least $1-\delta$. 
%approximate the similarity score $s(v_i, v_j)$? How to minimize the difference between $Q^*(v, k)$ and $Q(v, k)$ (or maximize the similarity between the two sets), i.e.,
\end{problem}

The difference between $X^*_{v, k}$ and $X_{v, k}$ can be also viewed as the error-bound of the approximation. 
In the following section, we will propose a sampling-based method to approximate the top-$k$ vertex similarity. We will explain in details how the method can guarantee the error-bound and how it is able to efficiently achieve the goal.

\section{Panther: Fast Top-k Similarity Search Using Path Sampling}
\label{sec:approach}

We begin with considering some baseline solutions and then propose our path sampling  approach.
A simple approach to the problem is to consider the number of common neighbors of $v_i$ and $v_j$. If we use the Jaccard index~\cite{Jaccard:1901}, the similarity can be defined as
$$
S_{JA}(v_i, v_j)=\frac{|\mathcal{N}(v_i) \cap \mathcal{N}(v_j)|}{|\mathcal{N}(v_i) \cup \mathcal{N}(v_j)|}. $$
This method only considers local information and does not allow vertices to be similar if they do not share neighbors. 
%Additionally, directly applying this method to find top-$k$ similar vertices  results in a time complexity of $O(|V|d^2)$ with $d$ as the maximal degree, which is still infeasible when handling dense networks.

To leverage the structural information, one can consider algorithms like SimRank~\cite{jeh:KDD06}. SimRank estimates vertex similarity by iteratively propagating vertex similarity to neighbors until convergence (no vertex similarity changes), i.e.,
$$
S_{SR}(v_i, v_j) = \frac{C}{|\mathcal{N}(v_i)||\mathcal{N}(v_j)|} \sum_{v_l\in \mathcal{N}(v_i)} \nonumber
\sum_{v_m \in \mathcal{N}(v_j)} s(v_l, v_m),
$$
\noindent where $C$ is a constant between 0 and 1.
%When defining on a directed network, the neighborhood set $\mathcal{N}(v_i)$ in Eq.~\eqref{eq:simrank} can be replaced by an in-link set (the set of vertices that link to $v_i$).

SimRank similarity depends on the whole network and allows vertices to be similar without sharing neighbors. The problem with SimRank is its high computational complexity: $O(I|V|^2\bar{d}^2)$, which makes it infeasible to scale up to large networks.
 Though quite a few studies have been conducted recently~\cite{He:2010,lee:2012top},
the problem is still largely unsolved.

We propose a sampling-based method to estimate the top-$k$ similar vertices.
In statistics, sampling is a widely used technique to estimate a target distribution~\cite{vapnik1971uniform}. Unlike traditional sampling methods, we propose a random path sampling method, named Panther. Given a network $G=(V, E, W)$, Panther randomly generates $R$ paths with length $T$. Then the similarity estimation between two vertices is cast as estimating how likely it is that two vertices appear on a same path. 
%By building a kd-tree index for the sampled paths, we reduce the complexity to $O(R)$, where $R$ is the number of sampled paths.
%We theoretically prove that with a sample size of $R = O(|E|)$, Panther can achieve an approximation of error $\le \varepsilon$ with a probability of at least $1-\delta$.
Theoretically we prove that given an error-bound, $\varepsilon$, and a confidence level, $1 - \delta$, the sample size $R$ is independent of the network size. Experimentally, we demonstrate that the error-bound is dependent on the number of edges of the network.

\subsection{Random Path Sampling}

%We propose a random path sampling method, named Panther.
The basic idea of the method is that two vertices are similar if they frequently appear on the same paths. The principle is similar to that in Katz~\cite{katz:1953}.
%idea has been also used in previous work such as Katz~\cite{katz:1953}.

\vpara{Path Similarity.}
%We define a new similarity metric, named as \textit{path similarity}. 
To begin with, we introduce how to estimate vertex similarity based on $T$-paths. A $T$-path is defined as a sequence of vertices $p=(v_{1}, \cdots, v_{T+1})$, which consists of $T+1$ vertices and $T$ edges\footnote{Vertices in the same path do not need to be distinct.}. %Without the statement, a path is a sequence with length $T$ henceforth. 
Let $\Pi$ denotes all the $T$-paths in $G$.
Let $w(p)$ be the weight of a path $p$. The weight can be defined in different ways.
Given this, the path similarity between $v_i$ and $v_j$ is defined as:
\beq{
	\label{eq:pathsim}
	S_{RP}(v_i,v_j) =  \frac{\sum_{p \in P_{v_i,v_j}}w(p)}{\sum_{p \in \Pi}w(p)},
}

\noindent where  $P_{v_i,v_j}$ is a subset of $\Pi$ that contain both $v_i$ and $v_j$.

\vpara{Estimating Path Similarity with Random Sampling.}
To calculate the denominator in Eq~\eqref{eq:pathsim}, we need to enumerate all $T$-paths in $G$. 
\hide{
The time complexity is $O(|V|\bar{d}^T)$, where $\bar{d}$ is the average degree and $\bar{d}$ increases fast when the network gets denser. When $\bar{d}$ and $T$ getting large, it becomes intractable to enumerate all the paths. }
%The total number of all paths in $\Pi$ is infinite because we allow the same nodes to appear on the same path more than one time.
However, the time complexity is exponentially proportional to the path length $T$, and thus is inefficient when $T$ increases. 
Therefore, we propose a sampling-based method to estimate the path similarity. The key idea is that we randomly sample $R$ paths from the  network and recalculate Eq~\eqref{eq:pathsim} based on the sampled paths.
\beq{
	\label{eq:pathsim1}
	S_{RP}(v_i,v_j) =  \frac{\sum_{p \in P_{v_i,v_j}}w(p)}{\sum_{p \in P}w(p)}, 
}
\noindent where $P$ is the set of sampled paths.

%In particular, t
To generate a path, we randomly select a vertex in $G$ as the starting point, and then conduct random walks of $T$ steps from $v$ using $t_{ij}$ as the transition probability from vertex $v_i$ to $v_j$.
\beq{
\label{eq:transitionprobability}
t_{ij} = \frac{w_{ij}}{\sum_{v_k \in \mathcal{N}(v_i)} w_{ik}},
}

\noindent where $w_{ij}$ is the weight between  $v_i$ and $v_j$. In a unweighted network, the transition probability can be simplified as $1/|\mathcal{N}(v_i)|$.

Based on the random walk theory~\cite{feller:2008}, we define  $w(p)$ as
$$
	w(p) = \prod_{i=1,j=i+1}^{T}t_{ij}. \nonumber
$$

The path weight also represents the probability that a  path $p$ is sampled from $\Pi$; thus, $w(p)$ in Eq.~\eqref{eq:pathsim1} is absorbed, and we can rewrite the equation as follows:
\beq{
\label{eq:approximatepathsim}
S_{RP}(v_i, v_j) = \frac{|{P}_{v_i,v_j}|}{R}.
}

Algorithm~\ref{algo:gen} summarizes the process for generating the $R$ random paths. %After generating $R$ random paths, 
To calculate Eq.~\eqref{eq:approximatepathsim}, the time complexity is $O(RT)$, because it has to enumerate all $R$ paths. 
To improve the efficiency, we build an inverted index of vertex-to-path~\cite{baeza1999modern}. 
Using the index, we can retrieve all paths that contain a specific vertex $v$ with a complexity of $O(1)$. Then  Eq.~\eqref{eq:approximatepathsim} can be calculated with a complexity of $O(\bar{R}T)$, where $\bar{R}$ is the average number of paths that contain a vertex and $\bar{R}$ is proportional to the average degree $\bar{d}$.
%For an input network, we perform random walk to obtain $R$ paths (Figure~\ref)
%generated random paths and the vertex-to-path index.
Figure~\ref{fig:randompath} illustrates the process of random path sampling. 
Details of the algorithm are presented in Algorithm~\ref{algo:pathsim}.

\begin{figure}[t]
\centering
\includegraphics[width=0.48\textwidth]{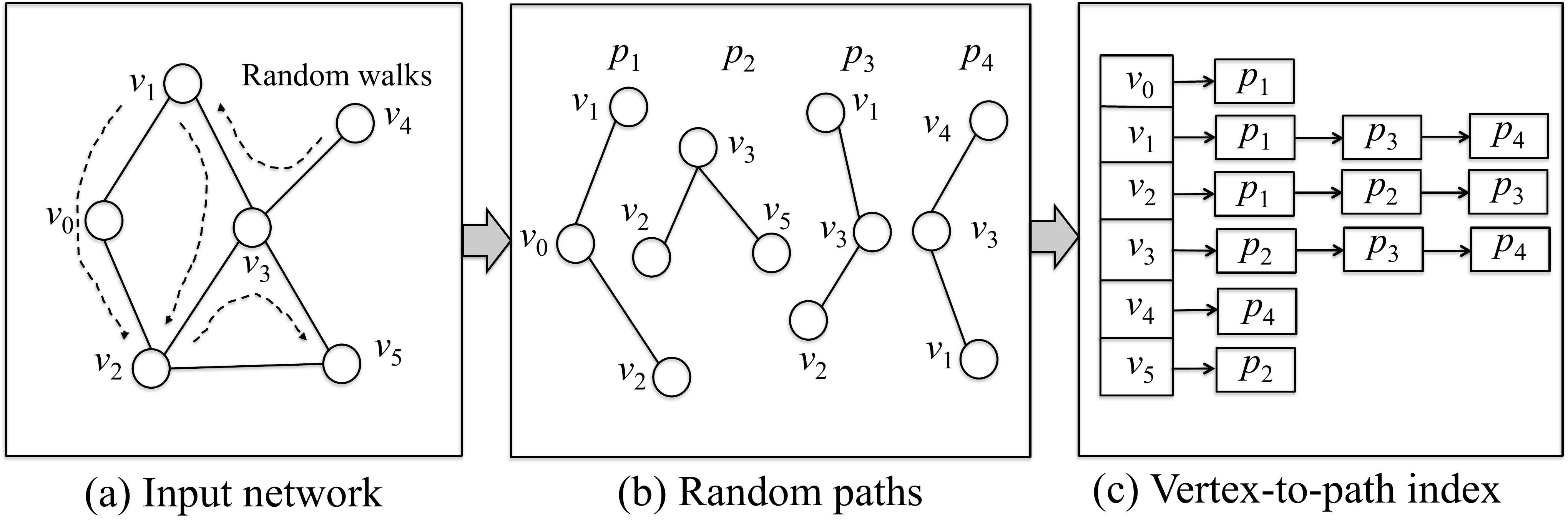}
%\vspace{-0.2in}
\caption{\label{fig:randompath} Illustration of random path sampling.}
\end{figure}

\hide{
Let $G=(V, E, W)$ denotes a network, where $V=\{v_1, \cdots, v_N\}$ is a set of $|V|=N$ vertices and $E\subset V \times V$ is a set of edges between these vertices. W  e use $e_{ij}$ to represent an edge between vertex $v_i$ and vertex $v_j$, and $W$ is the weight matrix with $w_{ij} \in W$ representing the weight associated with the edge $e_{ij}$ The index set of the neighborhood of a vertex $v_i$ is denoted to be $\mathcal{N}(v_i)$.

In this paper, we first consider undirected edges, then we have $e_{ij}=e_{ji}$. We leave directed edges as future study.

We define a path $p$ in a network as a sequence of vertices $(v_1, \cdots, v_{|p|})$, where $(v_i, v_{i+1}) \in E$ for each $ 1\leq i <  |p|$. The length of the path $T$ is defined to be the number of edges in the path, i.e. $T = |p| - 1$. We use $w(p)$ to denote the weight associated with the path $p$. And it is defined to be the product of transition probabilities along the path, that is

\beq{
\label{eq:definitionbak}
	w(p) = \prod_{i=1}^{T}t_{i, i+1}. \nonumber
}

Let $P^T_G$ denote the set of all the paths in $G$ with length $T$, that is $P^T_G= \{ p: |p|-1= T\}$. Let $P^T_{v_i,v_j}$ denote the set of paths in $P^T_{G}$ where two vertices $v_i$ and $v_j$ both appear, i.e. $P^T_{v_i,v_j} = \{p : v_i \in p \text{ and } v_j \in p \text{ and } p \in P^T_{G}\}$. Let $P^T_{v_i}$ denote the set of paths in $P^T_{G}$ where one vertex $v_i$ appears, i.e. $P^T_{v_i} = \{p : v_i \in p \text{ and }  p \in P^T_{G}\}$.

Here we make a formal definition of the path similarity between two vertices $v_i$ and $v_j$ as:
\beq{
	\label{eq:pathsimbak}
	\text{PSim}(v_i,v_j) =  \frac{\sum_{p \in P^T_{v_i,v_j}}w(p)} {\sum_{p \in P^T_{G}}w(p)}.
}

It is intractable to enumerate all the paths with length $T$ where both $v_i$ and $v_j$ appear. Thus we resort to a sampling-based method to approximate the path similarity. The general idea is that we randomly sample $R$ paths from the given network and count the number of co-occurred paths of two vertices as $\#\text{cp}(v_i,v_j)$. We use the ratio between $\#\text{cp}(v_i,v_j)$ and $R$ to approximate the path similarity defined in Eq.~\eqref{eq:definition}. In particular, to generate a path, we first sample a vertex $v$ uniformly at random and then conduct random walks of length $T$ from $v$ to others according to the transition probabilities between vertices.

Details of the algorithm are presented in Algorithm \ref{algo:pathsim}. We leave the justification for this sampling based approximation algorithm for section \ref{sec:proof}.
}

\subsection{Theoretical Analysis}

We give theoretical analysis for the random path sampling algorithm.
%One question prevalent in learning theory community is that how well we can approximate the true distribution using the empirical average. As can be shown later,
In general, the path similarity can be viewed as a probability measure defined over all paths $\Pi$. Thus we can adopt the results from Vapnik-Chernovenkis (VC) learning theory~\cite{vapnik1971uniform} to analyze the proposed sampling-based algorithm.
% by how well we can approximate the true distribution using the empirical average.
To begin with, we will introduce some basic definitions and fundamental results from Vapnik-Chernovenkis theory, and then demonstrate how to utilize these concepts and results to analyze our method.

%\subsection
\vpara{Preliminaries.}
%We outline here some basic definitions and results.
Let $(\mathcal{D}, \mathcal{R})$ be a range space, where $\mathcal{D}$ denotes a domain, and $\mathcal{R}$ is a range set on $\mathcal{D}$.
%, if it is a family of subsets of $\mathcal{D}$. 
%So the elements in $\mathcal{D}$ are points, and the elements in $\mathcal{R}$ are ranges.
 For any set $B \subseteq \mathcal{D}$, $P_{\mathcal{R}}(B) = \{B \cap A : A \in \mathcal{R}\}$ is the projection of $\mathcal{R}$ on $B$.
 If $P_{\mathcal{R}}(B) = 2^{B}$, where $2^{B}$ is the powerset of $B$, we say that the set $B$ is shattered by $\mathcal{R}$. The following definitions and theorem derive from~\cite{Riondato:2014}.
\begin{definition}
 The Vapnik-Chervonenkis (VC) dimension of $\mathcal{R}$, denoted as $VC(\mathcal{R})$, is the maximum cardinality of a subset of $\mathcal{D}$ that can be shattered by $\mathcal{R}$.
\end{definition}

Let $S=\{x_1, \cdots, x_n\}$ be a set of i.i.d. random variables sampled according to a distribution $\phi$ over the domain $\mathcal{D}$. For a set $A \subseteq \mathcal{D}$, let $\phi(A)$ be the probability that an element sampled from $\phi$ belongs to $A$, and let the empirical estimation of $\phi(A)$ on $S$ be
$$
 	\phi_{S}(A) = \frac{1}{n} \sum_{i=1}^{n} \mathds{1}_A(x_i), \nonumber
 $$
 \noindent where $\mathds{1}_A$ is the indicator function with the value of $\mathds{1}_A(x)$ equals 1 if $X \in A$, and 0 otherwise.

The question of interest is that how well we can estimate $\phi(A)$ using its unbiased estimator, the empirical estimation $\phi_{S}(A)$. We first give the goodness of approximation in the following definition.
\begin{definition}
 Let $\mathcal{R}$ be a range set on $\mathcal{D}$, and $\phi$ be a probability distribution defined on $\mathcal{D}$. For $\varepsilon \in (0,1)$, an $\varepsilon$-approximation to $(\mathcal{R}, \phi)$ is a set $S$ of elements in $\mathcal{D}$ such that
$$
    sup_{A \in \mathcal{R}}|\phi(A) - \phi_S(A)| \leq \varepsilon.\nonumber
 $$
\end{definition}

One important result of VC theory is that if we can bound the $VC$-dimension of $\mathcal{R}$, it is possible to build an $\varepsilon$-approximation by randomly sampling points from the domain according to the distribution $\phi$. This is summarized in the following theorem.

\begin{theorem} \label{eq:samplesize}
Let $\mathcal{R}$ be a range set on a domain $\mathcal{D}$, with $VC(\mathcal{R}) \leq d$, and let $\phi$ be a distribution on $\mathcal{D}$. Given $\varepsilon, \delta \in (0,1)$ , let $S$ be a set of $|S|$ points sampled from $\mathcal{D}$ according to $\phi$, with
$$
	|S| =  \frac{c}{\varepsilon^2} (d +\ln \frac{1}{\delta}), \nonumber
$$
\noindent where $c$ is a universal positive constant. Then $S$ is a $\varepsilon$-approximation to $(\mathcal{R}, \phi)$ with probability of at least $1-\delta$.
\end{theorem}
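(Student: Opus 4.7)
The plan is to follow the classical route to uniform convergence bounds in VC theory, namely symmetrization, Sauer--Shelah, and a union bound. First I would reformulate the claim: we want to show that, with probability at least $1-\delta$ over the choice of $S$, the quantity $\Delta(S) := \sup_{A \in \mathcal{R}} |\phi(A) - \phi_S(A)|$ is at most $\varepsilon$. The difficulty is that $\mathcal{R}$ may be infinite, so naively applying Hoeffding to each $A \in \mathcal{R}$ and union bounding does not work. The key idea is to replace $\mathcal{R}$ by the (finite) collection of traces it leaves on a sample, which is precisely what VC dimension controls.

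Step one is the symmetrization trick. Introduce an independent ghost sample $S' = \{x_1',\ldots,x_n'\}$ drawn from $\phi^n$ and show that, provided $n \varepsilon^2 \geq 2$, one has
\[
\Pr[\Delta(S) \geq \varepsilon] \;\leq\; 2\,\Pr\!\bigl[\sup_{A \in \mathcal{R}} |\phi_S(A) - \phi_{S'}(A)| \geq \varepsilon/2\bigr].
\]
This replaces the unknown distribution $\phi(A)$ by the empirical $\phi_{S'}(A)$, so that the supremum is now effectively taken over the projection $P_{\mathcal{R}}(S \cup S')$, which is a finite set.

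Step two is to bound the cardinality of this projection. By the Sauer--Shelah lemma, since $VC(\mathcal{R}) \leq d$, one has $|P_{\mathcal{R}}(S \cup S')| \leq \sum_{i=0}^{d} \binom{2n}{i} \leq (2n)^d$ (for $2n \geq d$). So the relevant sup is over at most $(2n)^d$ events.

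Step three is the conditional bound: conditioned on $S \cup S'$, the joint sample is just a uniformly random partition of the $2n$ points into two halves of size $n$, so for each fixed $A$ the quantity $\phi_S(A) - \phi_{S'}(A)$ is a sum of $n$ bounded random variables with zero mean under the random swap. A standard Hoeffding (or Chernoff) bound on permutations gives $\Pr[|\phi_S(A)-\phi_{S'}(A)| \geq \varepsilon/2] \leq 2 \exp(-n\varepsilon^2/8)$. Taking a union bound over the $(2n)^d$ candidate sets and combining with the symmetrization factor yields
\[
\Pr[\Delta(S) \geq \varepsilon] \;\leq\; 4\,(2n)^d \exp(-n\varepsilon^2/8).
\]
Setting this quantity $\leq \delta$ and solving for $n$ gives $n \geq \tfrac{c}{\varepsilon^2}\bigl(d + \ln\tfrac{1}{\delta}\bigr)$ for a suitable universal constant $c$, which is exactly the claimed sample size.

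The main obstacle, in my view, is the symmetrization step, because it requires the auxiliary observation that $\phi_{S'}(A)$ is already close to $\phi(A)$ with constant probability (an application of Chebyshev using $n\varepsilon^2 \geq 2$), which is what allows us to trade the population quantity $\phi(A)$ for an empirical one on a ghost sample. Once that is in hand, Sauer--Shelah and Hoeffding are mechanical. The constant $c$ that emerges is not optimized; tighter constants follow from a more careful chaining argument, but for the statement as given any universal $c$ (e.g.\ $c = 8$ up to lower-order terms) suffices.
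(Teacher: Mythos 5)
The paper does not actually prove this theorem---it is quoted verbatim from Riondato and Kornaropoulos~\cite{Riondato:2014}, who in turn attribute it to the refinement of the classical VC $\varepsilon$-approximation theorem due to Li, Long, and Srinivasan (building on Talagrand). So you are supplying a proof where the paper supplies a citation, and the question is whether your proof actually establishes the statement as written. It does not quite: the symmetrization--Sauer--Shelah--union-bound route is correct in all of its individual steps (the ghost-sample reduction under $n\varepsilon^2 \geq 2$, the bound $|P_{\mathcal{R}}(S\cup S')| \leq (2n)^d$, and the conditional Hoeffding bound over random swaps are all standard and sound), but the final step ``setting $4(2n)^d\exp(-n\varepsilon^2/8)\leq\delta$ and solving for $n$'' does not yield $n = \frac{c}{\varepsilon^2}(d+\ln\frac{1}{\delta})$. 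It yields the self-referential condition
\[
n \;\geq\; \frac{8}{\varepsilon^2}\Bigl(d\ln(2n) + \ln\frac{4}{\delta}\Bigr),
\]
whose solution is $n = \Theta\bigl(\frac{d}{\varepsilon^2}\log\frac{d}{\varepsilon} + \frac{1}{\varepsilon^2}\log\frac{1}{\delta}\bigr)$. The extra $\log\frac{1}{\varepsilon}$ factor multiplying $d$ grows without bound as $\varepsilon\to 0$ and therefore cannot be absorbed into a universal constant $c$. In other words, the classical argument proves a strictly weaker theorem than the one stated.

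The gap is in your closing remark, where you describe chaining as a way to get ``tighter constants.'' It is not an optimization: the log-free bound $\frac{c}{\varepsilon^2}(d+\ln\frac{1}{\delta})$ is precisely what the classical union-bound argument cannot deliver, and removing that logarithmic factor is the substantive content of Talagrand's chaining argument and of the Li--Long--Srinivasan improvement. To prove the theorem as stated you would need to replace the single union bound over the $(2n)^d$ traces with a multi-scale (chaining) decomposition of the class, or else weaken the statement to include the $\log\frac{1}{\varepsilon}$ factor---which, incidentally, would still suffice for the way the paper uses the theorem, since it would only change the sample size $R$ by a logarithmic factor in $T$ and $\varepsilon$.
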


\para{Range Set of Path.}
In our setting, we set the domain to be $\Pi$---the set of all paths with length $T$ in the graph $G$. Accordingly, we define the range set $\mathcal{R}_G$ on domain $\Pi$ to be
$$\mathcal{R}_G = \{ P_{v_i, v_j} : v_i, v_j \in V\}. $$

It is a valid range set, since it is the collection of subsets $P_{v_i, v_j}$ of domain $\Pi$. We  first show an upper bound of the VC dimension of $\mathcal{R}_G$ in Lemma \ref{eq:vc}.
%The range set  is defined on the set $P^T_G$ of all the paths with length $T$ in graph $G$. It contains, for any two vertices $v_i, v_j \in V$, the set $P_{v_i,v_j}^T$ of paths with length $T$ that $v_i$ and $v_j$ both appear:
The proof is inspired by ~\cite{Riondato:2014}.
\begin{lemma} \label{eq:vc} $VC(\mathcal{R}_G)  \leq   \log_2 \binom{T}{2} +1$
\end{lemma}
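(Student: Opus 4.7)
The plan is to adapt the Vapnik-Chervonenkis counting argument used by Riondato for betweenness centrality to our path setting. The high-level idea is: suppose a set $Q \subseteq \Pi$ of $\ell$ paths is shattered by $\mathcal{R}_G$; then I will upper-bound $\ell$ by counting, inside $Q$, how many distinct ``traces'' $Q \cap P_{v_i, v_j}$ can be induced by pair-range sets that are forced to contain a single distinguished path $p^\ast \in Q$.

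First, I would fix $p^\ast \in Q$ to be the path whose set of vertices appearing on it has the smallest cardinality, and let $V(p^\ast)$ denote that vertex set. The key structural observation is that $P_{v_i, v_j}$ contains $p^\ast$ if and only if both $v_i$ and $v_j$ lie on $p^\ast$; hence the number of pair-range sets $P_{v_i, v_j}$ containing $p^\ast$ is at most $\binom{|V(p^\ast)|}{2}$. Since $p^\ast$ is a $T$-path, $|V(p^\ast)|$ is at most $T+1$, but — as I will argue next — only effectively $T$ of those vertices can be distinct witnesses in the shattering argument (e.g.\ because a pair must consist of two \emph{distinct} vertices on the path, or, more sharply, because one vertex of the path is fixed by the constraint that $P_{v_i,v_j} \cap Q$ already contains $p^\ast$), giving the bound $\binom{T}{2}$.

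Next, I would exploit the definition of shattering: for every subset $S \subseteq Q$ with $p^\ast \in S$, there must exist a pair $(v_i, v_j)$ with $S = Q \cap P_{v_i, v_j}$, and since $p^\ast \in S$ this pair must satisfy $v_i, v_j \in V(p^\ast)$. There are $2^{\ell - 1}$ such subsets $S$, and distinct subsets of $Q$ require distinct pairs. Combining this with the bound from the previous step gives
\[
2^{\ell - 1} \;\leq\; \binom{T}{2},
\]
and taking logarithms yields $\ell \leq \log_2 \binom{T}{2} + 1$, which is the desired bound on $VC(\mathcal{R}_G)$.

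The main obstacle is the refinement from the easy bound $\binom{T+1}{2}$ (which falls out immediately from ``a $T$-path has $T+1$ vertices, hence $\binom{T+1}{2}$ unordered pairs'') down to the claimed $\binom{T}{2}$. I expect this to require either (i) observing that one vertex of $V(p^\ast)$ is forced and therefore only $T$ ``free'' vertices remain in the pair selection, or (ii) choosing $p^\ast$ more cleverly so that $|V(p^\ast)|$ is provably at most $T$ — for instance by invoking a pigeonhole-style argument on two paths of $Q$ whose shared vertex structure must be restricted if $Q$ is to be shattered. The rest of the proof is then a standard VC counting computation.
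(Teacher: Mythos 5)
Your proposal follows essentially the same route as the paper's own proof: fix a path $p$ in a shattered set $Q$ of size $\ell$, observe that shattering forces $2^{\ell-1}$ distinct ranges $P_{v_i,v_j}$ containing $p$, bound the number of ranges that can contain $p$ by the number of vertex pairs on $p$, and conclude $2^{\ell-1}\leq\binom{T}{2}$. The one step you flag as unresolved --- sharpening the immediate count $\binom{T+1}{2}$ (a $T$-path has $T+1$ vertices) down to $\binom{T}{2}$ --- is precisely the step the paper does not justify either: it simply asserts that the number of ranges containing $p$ equals $\binom{T}{2}$, so neither of your proposed repairs (i) or (ii) appears in the paper's argument. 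What your reasoning rigorously delivers is $VC(\mathcal{R}_G)\leq\log_2\binom{T+1}{2}+1$, which is the same place the paper's counting actually lands; in that sense you have reproduced the paper's proof, including its weakest link.
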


\begin{proof}
We  prove the lemma by contradiction. Assume  $VC(\mathcal{R}_G) = l$ and  $l >  \log_2 \binom{T}{2} +1$.
By the definition of VC-dimension, there is a set $Q  \subseteq \Pi$ of size $l$ that can be shattered by $\mathcal{R}_G$. That is, we have the following statement:
$$
\forall S_i \subseteq Q \text{ , } \exists P_i\in \mathcal{R}_G \text{, ~~~~s.t. } P_i\cap Q = S_i,  $$

\noindent where $P_i$ is the $i$-th range.
\hide{From the above definition, we can get that any two paths $p_1, p_2 \in Q$, we have neither $p_1 \subseteq p_2$ nor $p_2 \subseteq p_1$. otherwise one of the two paths would appear in all ranges where the other one appears, and so it would be impossible to shatter $Q$.  We prove this by contradiction.  We assume there are two paths $p_1, p_2 \in Q$ and $p_1 \subseteq p_2$, e.g., $p_1=\{v_1,v_2\}$ and $p_2 = \{v_1,v_2\}$. Then we can only find $P_{v_1,v_2} = \{p_1,p_2\}$ such that $P_{v_1,v_2}  \cap Q =\{p_1, p_2\}$, but cannot find any $P_{v_i, v_j} \in \mathcal{R}_G$ such that $P_{v_i, v_j}  \cap Q = \{p_1\}$ or $P_{v_i, v_j}  \cap Q = \{p_2\}$,  we reach the contradiction that $Q$ can be shattered by $\mathcal{R}_G$.
}
Since  each subset $S_i \in Q$ is different from the other subsets, the corresponding range $P_i$  that making  $P_i\cap Q = S_i$ is also different from the other ranges. Moreover, the set $Q$ is shattered by $\mathcal{R}_G$ if and only if $\{P_i\cap Q : P_i \in \mathcal{R}\} = 2^Q$. Thus  $\forall p \in Q$, there are $2^{l-1}$ non-empty distinct subsets $S_1, \cdots, S_{2^{l-1}}$ of $Q$ containing the path $p$. 
%Given that $Q$ is shattered by $\mathcal{R}_G$, for each set $S_i$,  there exists a range $P_i \in \mathcal{R}_G$ such that $S_i = Q \cap P_i$.  
%Since all the subsets $S_i$'s are different from each other, then all the $P_i$'s are also different from each other. 
So there are also $2^{l-1}$ distinct ranges in $\mathcal{R}_G$ that contain the path $p$, 
%Given that $p$ is a member of every $S_i$, $p$ must also belong to each $P_i$; that is, there are $2^{l-1}$ distinct ranges in $\mathcal{R}_G$ containing $p$, 
i.e.
$$	|\{ P_i| p \in P_i \text{ and } P_i  \in \mathcal{R}_G\}| = 2^{l-1}.  $$

In addition, according to the definition of range set, $\mathcal{R}_G = \{ P_{v_i, v_j} : v_i, v_j \in V\}$, we know that a path belongs only to the ranges corresponding to any pair of vertices in path $p$, i.e., to the pairwise combinations of the vertices in $p$. This means the number of ranges in $\mathcal{R}_G$ that $p$ belongs to is equal to the combinatorial number $\binom{T}{2} $, i.e.,
$$
		|\{ P_i| p \in P_i \text{ and } P_i  \in \mathcal{R}_G\}| = \binom{T}{2} . \nonumber
$$

On the other hand, from our preliminary assumption, we have $l >  \log_2 \binom{T}{2} +1$, which is equivalent to $\binom{T}{2}  < 2^{l-1}$. Thus,
$$
	|\{ P_i| p \in P_i \text{ and } P_i  \in \mathcal{R}_G\}| = \binom{T}{2}  < 2^{l-1}. \nonumber
$$

Hence, we reach a contradiction: it is impossible to have $2^{l-1}$ distinct ranges $P_i \in \mathcal{R}_G$ containing $p$.
Since there is a one-to-one correspondence between $S_i$ and $P_i$, we get that it is also impossible to have $2^{l-1}$ distinct subset $S_i \in Q$ containing $p$.  Therefore, we prove that $Q$ cannot be shattered by $\mathcal{R}_G$ and $VC(\mathcal{R}_G)  \leq   \log_2 \binom{T}{2}  +1.$
\end{proof}

%\subsection
\para{Sample Size Guarantee.}
We now provide theoretical guarantee for the number of sampled paths.
%In another word, h
How many random paths do we need to achieve an error-bound $\varepsilon$ with probability $1-\delta$?
We define a probability distribution on the domain $\Pi$. $\forall p \in \Pi$, we define
$$
	\phi(p) = \text{prob}(p) = \frac{w(p)}{\sum_{p \in \Pi}w(p)}.
$$

We can see that the definition of $S_{RP}(v_i, v_j)$ in Eq.\eqref{eq:pathsim} is equivalent to $\phi(P_{v_i, v_j})$. This observation enables us to use a sampling-based method (empirical average) to estimate the original path similarity (true probability measure).

Plugging the result of Lemma \eqref{eq:vc} into Theorem \eqref{eq:samplesize}, we obtain:
$$
	R = \frac{c}{\varepsilon^2} ( \log_2 \binom{T}{2} +1 +\ln \frac{1}{\delta}). 
$$

That is, with at least $R$ random paths, we can estimate the path similarity between any two vertices with the desired error-bound and confidence level. The above equation also implies that the sample size $R$ only depends on the path length $T$, given an error-bound $\varepsilon$, and a confidence level $1-\delta$.

\begin{algorithm}[!t]
     \caption{Panther}\label{algo:pathsim}

       \KwIn{A network $G=(V, E, W)$, path length $T$,  parameters $\varepsilon$, $c$, $\delta$, a vertex $v$, and $k$.}
       \KwOut{top-$k$ similar vertices with regard to $v$.}
       Calculate sample size $R = \frac{c}{\varepsilon^2} ( \log_2 \binom{T}{2} +1 +\ln \frac{1}{\delta})$ \;
       GenerateRandomPath($G$, $R$)\;
       	\ForEach{ $p_n \in P_{v}$}{
       			\ForEach{ \text{Unique} $v_j \in p_n$}{
       				$S_{RP}(v, v_j) +=\frac{1}{R}$ \;
       			}
       	}
       	Retrieve top-$k$ similar vertices according to $S_{RP}(v, v_j)$\;
\end{algorithm}

\begin{algorithm}[!t]
     \caption{Panther++}\label{algo:pathvec}

       \KwIn{A network $G=(V, E, W)$, path length $T$,  parameters $\varepsilon$, $c$, $\delta$, vector dimension $D$, a vertex $v$, and $k$.}
       \KwOut{top-$k$ similar vertices with regard to $v$.}
       Calculate sample size $R = \frac{c}{\varepsilon^2} ( \log_2 \binom{T}{2} +1 +\ln \frac{1}{\delta})$ \;
       GenerateRandomPath($G$, $R$)\;
       \ForEach{ $v_i \in V$ }{
       		\ForEach{ $p_n \in P_{v_i}$}{
       			\ForEach{ \text{Unique} $v_j \in p_n$}{
       					$S_{RP}(v_i, v_j) +=\frac{1}{R}$ \;
       			}
       		}
       		Construct a vector $\theta(v_i)$ by taking the largest $D$ values from $\{S_{RP}(v_i, v_j): v_j \in p_n  \text{ and }  p_n \in P_{v_i}\}$\;
       }
       Build a kd-tree index based on the Euclidean distance between any vectors $\theta(v_i)$ and $\theta(v_j)$ \;
       Query the top-$k$ similar vertices from the index for  $v$\;
\end{algorithm}

\begin{algorithm}[!t]
     \caption{GenerateRandomPath}\label{algo:gen}

       \KwIn{A network $G=(V,E, W)$ and sample size $R$.}
       \KwOut{Paths $\{p_r\}_{r=1}^{R}$ and vertex-to-path index $\{P_{v_i}\}_{i=1}^N$.}

       Calculate transition probabilities between every pair of vertices according to Eq.~\eqref{eq:transitionprobability} \;
       Initialize $r=1$\;
       \Repeat{$r < R$}
       {
       		Sample current vertex $v = v_i$ uniformly at random \;
       	    Add $v$ into $p_r$ and add $p_r$ into the path set of $v$, i.e., $P_{v}$ \;
       	    \Repeat{$|p_r| < T+1$}
       	    {
       	    	 Randomly sample a neighbor $v_j$ according to transition probabilities from $v$ to its neighbors\;
       	    	 Set current vertex $v = v_j$\;
       	    	 Add $v$ into $p_r$ and add $p_r$ into $P_{v}$ \;
       	    }
       		$r+=1$\;
       }
\end{algorithm}

\subsection{Panther++}
One  limitation of \model is that the similarities obtained by the algorithm  have a bias to close neighbors, though in principle it considers the structural information. We therefore present an extension of the Panther algorithm. The idea is to augment each vertex with a feature vector. To construct the feature vector, we follow the intuition that the probability of a vertex linking to all other vertices is similar if their topology structures are similar~\cite{holland1981exponential}. We select the top-$D$ similarities calculated by Panther to represent the probability distribution.
Specifically, for vertex $v_i$ in the network, we first calculate the similarity between $v_i$ and all the other vertices using Panther. Then we construct a feature vector for $v_i$ by taking the largest $D$ similarity scores as feature values, i.e.,
$$
\theta(v_i) = (S_{RP}(v_i, v_{(1)}), S_{RP}(v_i, v_{(2)}), \ldots , S_{RP}(v_i, v_{(D)})), $$

\noindent where $S_{RP}(v_i, v_{(d)})$ denotes the $d$-th largest path similarity between $v_i$ and another vertex $v_{(d)}$.

Finally, the similarity between $v_i$ and $v_j$ is re-calculated as the reciprocal Euclidean distance between their feature vectors:
$$
	S_{RP^{++}}(v_i,v_j) =   \frac{1}{  \lVert \theta(v_i) - \theta(v_j)\rVert}. \nonumber
$$

The idea of using vertex features to estimate vertex similarity was also used for graph mining~\cite{henderson:KDD11}.

\vpara{Index of Feature Vectors}
Again, we use the indexing techniques to improve the algorithm efficiency.
We build a memory based kd-tree~\cite{wald2006building} index for feature vectors of all vertices. 
Then given a vertex, we can retrieve top-$k$ vertices  in
the kd-tree with the least Euclidean distance to the query vertex efficiently. 
At a high level, a kd-tree is a generalization of a binary search tree that stores points in $D$-dimensional space. In level $h$ of a kd-tree, given a node $v$, the $h \% D$-th element in the vector of each node in its left subtree is less than the $h \% D$-th element in the vector of $v$, while the $h \% D$-th element of every node in the right subtree is no less than the $h \% D$-th element of $v$. Figure~\ref{fig:kd-tree} shows the data structure of the index built in Panther++. Based on the index, we can  query whether a given point is stored in the index very fast. Specifically, given a vertex $v$, if the root node is $v$, return the root node. If the first element of $v$ is strictly less than the first element of the root node, look for $v$ in the left subtree, then compare it to the second element of $v$. Otherwise, check the right subtree.
It is worth noting that we can easily replace kd-tree with any other index methods, such as r-tree.
The algorithms for calculating feature vectors of all vertices and the similarity between vertices are shown in Algorithm \ref{algo:pathvec}.

\begin{figure}[t]
\centering
\includegraphics[width=0.4\textwidth]{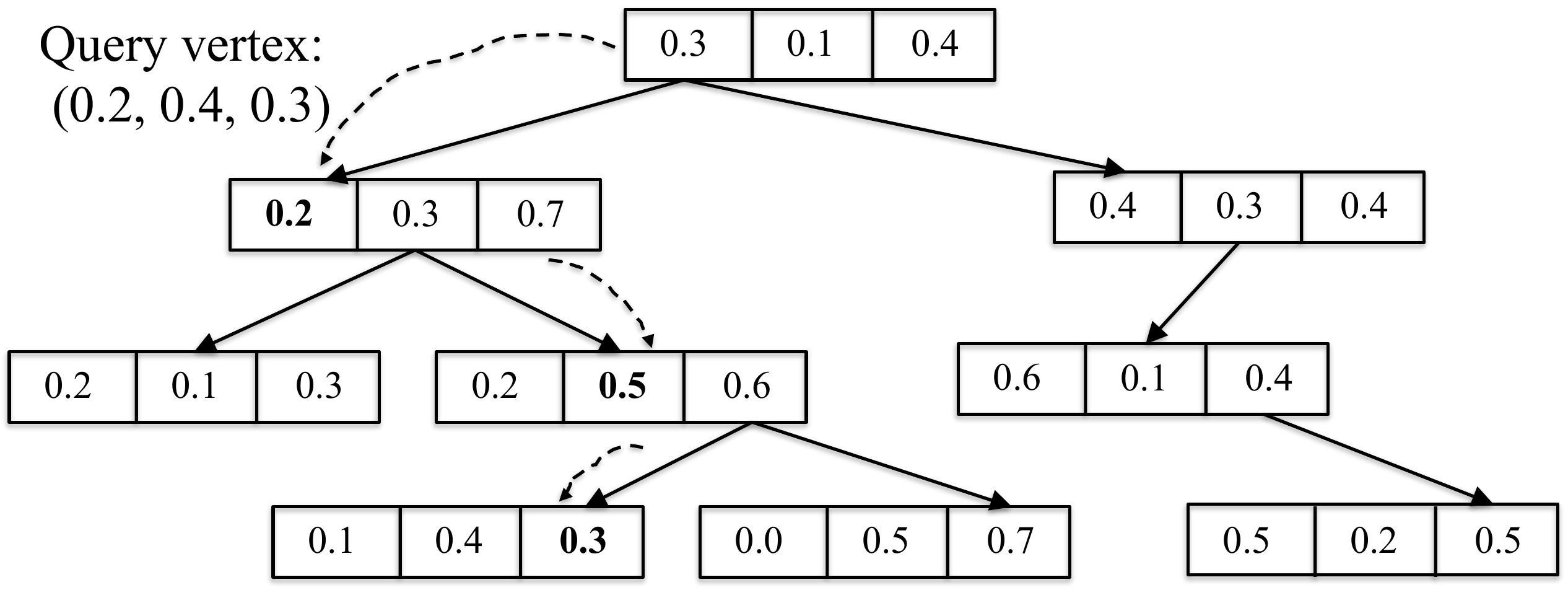}
\caption{\label{fig:kd-tree} Data structure of the index built in Panther++.}
\end{figure}

%\subsection
\vpara{Implementation Notes.}
In our experiments, we empirically set the parameters as follows:  $c=0.5$, $\delta=0.1$,  $T=5$, $D=50$ and $\varepsilon=\sqrt{1/|E|}$. The optimal values of $T$, $D$ and $\varepsilon$ are discussed in section \ref{sec:exp}. We build the kd-tree using the toolkit ANN\footnote{http://www.cs.umd.edu/~mount/ANN/}.
%Our method is easy to be parallelized because the process of generating random paths and constructing feature vector for a vertex is independent of that for other vertices. Thus we also implement the multi-thread version of our method, which parallelizes the outmost loops in Algorithm \ref{algo:gen} and Algorithm\ref{algo:pathvec} respectively.

\begin{table}
\centering
\caption{\label{tb:current_method}  Time and space complexity for calculating top-$k$ similar vertices for all vertices in a network. $I$--- number of iterations, $\bar{d}$---average degree, $f$---feature number, $D$ --- vector dimension, and $T$ --- path length. }
\renewcommand\arraystretch{1.2}
\small 
\begin{tabular}{@{ }c@{ }|c@{ }|c@{}}
\hline
\textbf{Method}&\textbf{Time Complexity}&\textbf{Space Complexity} \\
\hline
SimRank~\cite{jeh:KDD06}                   &  $O(I|V|^2\bar{d}^2)$          &  $O(|V|^2)$ \\
TopSim~\cite{lee:2012top}                   &  $O(|V|T\bar{d}^T)$             &  $O(|V|+|E|)$ \\
RWR~\cite{pan2004automatic}             &  $O(I|V|^2\bar{d})$              &  $O(|V|^2)$ \\
RoleSim~\cite{jin:KDD11}                     &  $O(I|V|^2\bar{d}^2)$          &  $O(|V|^2)$ \\
ReFex~\cite{henderson:KDD11}            &   $O(|V|+I(f|E| + |V|f^2))$             & $O(|V|+|E|f)$  \\
Panther                                              &  $O(RTc + |V| \bar{d} T)$            & $O(RT + |V| \bar{d})$  \\
Panther++                                         &   $O(RTc +|V| \bar{d} T  +|V|c)$            & $O(RT+|V|\bar{d} + |V|D)$  \\
\hline
\end{tabular}
\end{table}

\subsection{Complexity Analysis}
In general, existing methods result in high complexities.
%take a long time to find top-$k$ similar vertices for any vertex in $G$. 
For example, the time complexity of SimRank~\cite{jeh:KDD06}, TopSim~\cite{lee:2012top}, Random walk with restart (RWR)~\cite{pan2004automatic}, RoleSim~\cite{jin:KDD11}, and ReFex~\cite{henderson:KDD11} is $O(I|V|^2\bar{d}^2)$, $O(|V|T\bar{d}^T)$,  $O(I|V|^2\bar{d})$, $O(I|V|^2\bar{d}^2)$, and  $O(|V|+I(f|E| + |V|f^2))$, respectively. 
Table~\ref{tb:current_method} summarizes the time and space complexities of the different  methods.
For Panther, its time complexity  includes two parts:
\begin{itemize}
\item \textbf{Random path sampling:} The time complexity of generating random paths is $O(RT \log \bar{d})$, where $\log \bar{d}$ is very small and can be simplified as a small constant $c$. Hence, the time complexity is $O(RTc)$.
\item \textbf{Top-$k$ similarity search:} The time complexity of calculating top-$k$ similar vertices for all vertices is $O( |V|\bar{R}T + |V|\bar{M} )$. The first part $O( |V|\bar{R}T)$ is the time complexity of calculating Eq.~\eqref{eq:approximatepathsim} for all pairs of vertices, where $\bar{R}$ is the average number of paths that contain a vertex and is proportional to the average degree $\bar{d}$. The second part $O( |V| \bar{M} )$ is the time complexity of searching top-$k$ similar vertices based on a  heap structure, where $\bar{M}$ represents the average number of co-occurred vertices with a vertex and is proportional to $\bar{d}$. Hence, the time complexity is $O(|V| \bar{d} T)$.
 \end{itemize}
 
 The space complexity for storing paths and vertex-to-path index is $O(RT)$  and $O(|V|\bar{d})$, respectively.  
 
%Based on Panther, 
Panther++ requires additional computation to build the kd-tree. The time complexity of building a  kd-tree is $O(|V|\log |V|)$ and querying top-$k$ similar vertices for any vertex is $O(|V| \log |V|) $, where $\log |V|$ is  small and can be viewed as a small constant $c$. Additional space (with a complexity of $O(|V|D)$) is required to store $|V|$ vectors with $D$-dimension.

\section{Experiments}
\label{sec:exp}

\subsection{Experimental Setup}

In this section, we conduct various experiments to evaluate the proposed methods for top-$k$ similarity search.

\hide{
\begin{itemize}
\item Does PathSim successfully capture the similarity between vertices with many common neighbors?
\item Does PathVec capture the similarity between vertices with similar structures?
\item How well do PathSim and PathVec perform in terms of computational efficiency?

\end{itemize}

}

\vpara{Datasets.} We evaluate the proposed method on four different networks: Tencent, 
Twitter, Mobile, and co-author. 

\textbf{Tencent~\cite{yang2014rain}}: The dataset is from Tencent Weibo\footnote{http://t.qq.com}, a popular Twitter-like microblogging service in China, and consists of over 355,591,065 users and 5,958,853,072 ``following'' relationships. 
%If there exists a `` following'' link from a user $v$ to another user $u$, we build a link between them. 
The weight associated with each edge is set as 1.0 uniformly. 
This is the largest network in our experiments. 
We mainly use it to evaluate the efficiency performance of our methods.
% in Tencent network.
%We sample sub-networks with different scales from the original network by randomly selecting vertices and employing breadth-first search.

\textbf{Twitter~\cite{Hopcroft:11CIKM}}: The dataset was crawled in the following way. %To begin the collection process, w
We first selected the most popular user on Twitter, i.e., ``Lady Gaga'', and randomly selected 10,000 of her followers. We then collected all followers of these users. In the end, we obtained 113,044 users and 468,238 ``following'' relationships in total. The weight associated with each edge is also set as 1.0 uniformly. We use this dataset to evaluate 
the accuracy of Panther and Panther++. % in Twitter network.

\textbf{Mobile~\cite{Dong:14KDD}}: The dataset is from a mobile communication company, and consists of millions of call records.
Each call record contains information about the sender, the receiver, the starting time, and the ending time. We build a network using call records within two weeks by treating each user as a vertex, and communication between users as an edge.
%if they make calls with each other. 
The resultant network consists of 194,526 vertices and 206,934 edges.
% in total. 
The weight associated with each edge is defined as the number of calls. We also use this dataset to evaluate the accuracy of the proposed methods.

\textbf{Co-author~\cite{Tang:08KDD}}: The dataset\footnote{http://aminer.org/citation} is from AMiner.org, and contains 2,092,356 papers. From the original citation data, we extracted a weighted co-author graph from each of the following conferences from  2005 to 2013: KDD, ICDM, SIGIR, CIKM, SIGMOD, ICDE, and ICML\footnote{  Numbers of vertices/edges of different conferences are: KDD: 2,867/ 7,637, ICDM: 2,607/4,774, SIGIR: 2,851/6,354, CIKM: 3,548/7,076, SIGMOD: 2,616/8,304, ICDE: 2,559/6,668, and ICML: 3511/6105.}. The weight associated with each edge is the number of papers collaborated on by the two connected authors. We also use the dataset to evaluate the accuracy of the proposed methods.

%We treat all the networks as undirected networks in the experiments.

\hide{
\begin{table*}
\centering
\caption{\label{tb:dataset} Datasets. ($|V|$ is the number of vertices, $|E|$ is the number of edges.)}
\begin{tabular}{c|c|c|c|c|c|c|c|c|c}
\hline
Scale  &  KDD  & ICDM  &  SIGIR  & CIKM & SIGMOD & ICDE &  Twitter & Mobile & Science\\
\hline
$|V|$ & 2,867  & 2,607   &  2,851  & 3,548 & 2,616      & 2,559 & 112,044 &  194,526 & 1,589   \\
$|E|$ & 7,637  & 4,774   &  6,354  & 7,076 & 8,304      & 6,668 & 468,238  &  206,934 & 2,743 \\
\hline
\end{tabular}
\end{table*}
}

\vpara{Evaluation Aspects.} To quantitatively evaluate the proposed methods, we consider the following performance measurements:

\textbf{Efficiency Performance:} We apply our methods to the Tencent network to evaluate the computational time.
%as the efficiency metric.

\textbf{Accuracy Performance:} We apply the proposed methods to recognize identical authors on different co-author networks. 
We also apply our methods to the Coauthor, Twitter and Mobile networks to 
evaluate how they estimate the top-$k$ similarity search results.
%for finding top-$k$ similar vertices and evaluate whether the top similar vertices own numerous common neighbors or same social positions. 

\textbf{Parameter Sensitivity Analysis:} We analyze the sensitivity of different parameters in our methods: path length $T$, vector dimension $D$, and error-bound $\varepsilon$.
% with regards to the accuracy of Panther and Panther++.

Finally, we also use several case studies as anecdotal evidence to further demonstrate the effectiveness of the proposed method.
All codes are implemented in C++ and compiled using GCC 4.8.2 with -O3 flag.
The experiments were conducted on a Ubuntu server with four Intel Xeon(R) CPU E5-4650 (2.70GHz) and 1T RAM.

\vpara{Comparison methods.} We compare with the following methods:

\textbf{RWR~\cite{pan2004automatic}}: Starts from $v_i$, iteratively walks to its neighbors with the probability proportional to their edge weights. At each step, it also has some probability to walk back to $v_i$ (set as 0.1). The similarity between $v_i$ and $v_j$ is defined as the steady-state probability that $v_i$ will finally reach at $v_j$. We calculate RWR scores between all pairs and then search the top-$k$ similar vertices for each vertex.

\textbf{TopSim~\cite{lee:2012top}}: Extends SimRank~\cite{jeh:KDD06} on one graph $G$ to finding top-$k$ authoritative vertices on  the product graph $G \times G$ efficiently.

\textbf{RoleSim~\cite{jin:KDD11}}: Refines SimRank~\cite{jeh:KDD06} by changing the average similarity of all neighbor pairs to all matched neighbor pairs. We calculate RoleSim scores between all pairs and then search the top-$k$ similar vertices for each vertex.
\hide{
\beq{\small
	S(v_i,v_j) = (1-\beta) \max_{\mathcal{M}(v_i,v_j)} \frac{\sum_{(v_l, v_m) \in \mathcal{M}(v_i, v_j)}s(v_l, v_m)}{|\mathcal{N}(v_i)| +|\mathcal{N}(v_j)| -|\mathcal{M}(v_i, v_j)|} + \beta. \nonumber
\normalsize
}
\noindent where $\mathcal{M}(u,v)$ is a matching between $\mathcal{N}(u)$ and $\mathcal{N}(v)$, and $\beta$ is a decay factor. 
%and is set as 0.1 as in the original paper. 
%
}

\textbf{ReFeX~\cite{henderson:KDD11}}: Defines local, egonet, and recursive features to capture the structural characteristic. Local feature is the vertex degree. Egonet features include the number of within-egonet edges and the number of out-egonet edges. For weighted networks, they contain weighted versions of each feature. Recursive features are defined as the mean and sum value of each local or egonet feature among all neighbors of a vertex. In our experiments, we only extract recursive features once and construct a vector for each vertex by a total of 18 features.  For fair comparison, to search top-$k$ similar vertices, we also build the same kd-tree as that in our method.

The codes of TopSim, RoleSim, and ReFex are provided by the authors of the original papers. We tried to use the fast versions of TopSim and RoleSim mentioned in their paper.

\begin{table*}
 \centering
\caption{\label{tb:efficiency}Efficiency performance (CPU time) of comparison methods on different sizes of the Tencent sub-networks. 
The time includes all computational cost for processing and top-$k$ similarity search for all vertices.  The time before ``+'' denotes the time used for processing and the time after ``+'' denotes that used for top-$k$ similarity search. ``---'' indicates that the corresponding algorithm cannot finish the computation within a reasonable time.}
%CPU time costs of different methods (Pre-computational time + On-line query time).
\renewcommand\arraystretch{1.1}
%\small
\begin{tabular}{@{}l@{}|l|l|l|l|l|l|l|l@{}}
\hline 
\textbf{Sub-network} &  $\textbf{|V|}$ &
\textbf{|E| } & \textbf{RWR} &\textbf{TopSim}&\textbf{RoleSim} & \textbf{ReFeX} &\textbf{Panther}  & \textbf{Panther++} \\
\hline
\hline
Tencent1 	&	 6,523 & 10,000        &  +7.79hr & +28.58m          &  +37.26s                       &  3.85s+0.07s         &  0.07s+0.26s   &  0.99s+0.21s   \\
Tencent2 	&	25,844	   & 50,000    &  +>150hr   &+11.20hr           &  +12.98m                    &  26.09s+0.40s         &   0.28s+1.53s    &  2.45s+4.21s\\
Tencent3 	&	48,837 	 & 100,000      &  ---  &+30.94hr                        &  +1.06hr           &  2.02m+0.57s     &   0.58s+ 3.48s     &  5.30s+5.96s \\ 
Tencent4 	&	169,209	  &   500,000 &--- 		    &  +>120hr                    &  +>72 hr         &  17.18m+2.51s&   8.19s+16.08s  &  27.94s+24.17s  \\  
Tencent5    	& 230,103 	   &  1,000,000   &---       &  ---                      &  ---                     &  31.50m+3.29s&   15.31s+30.63s  &  49.83s+22.86s\\
Tencent6      &   443,070    &   5,000,000     &---     &  ---                      &  ---            &  24.15hr+8.55s         &   50.91s+2.82m  &  4.01m+1.29m\\
Tencent7 	  &  702,049     &  10,000,000     &---   &  ---                     &  ---               &>48hr                        &  2.21m+6.24m    & 8.60m+6.58m\\
Tencent8 	 & 2,767,344   &   50,000,000     &---   &  ---                      &  ---                                 &  ---      & 15.78m+1.36hr   &  1.60hr+2.17hr  \\
Tencent9 	 & 5,355,507  &   100,000,000     & ---&  ---                      &  ---                                 &  ---      &  44.09m +4.50hr &  5.61hr +6.47hr \\
Tencent10 	 & 26,033,969  &   500,000,000   &---   &  ---                      &  ---                                 &  ---      &  4.82hr +25.01hr &  32.90hr +47.34hr \\
Tencent11 	 & 51,640,620  &   1,000,000,000  &---    &  ---                      &  ---                                 &  ---     &  13.32hr +80.38hr &  98.15hr +120.01hr \\

\hline
\end{tabular}
\end{table*}

\subsection{Efficiency and Scalability Performance}

\hide{
\begin{table}[t]
\centering
\caption{\label{tb:tencentdataset} Tencent sub-networks. ($|V|$ is the number of vertices, $|E|$ is the number of edges ($\times 10^4$).)}
\begin{tabular}{ c|c|c|c|c|c}
\hline
Data set &  $|V|$  &  $|E|$  & Data set & $|V|$ & $|E|$    \\
\hline
Tencent1 	&	 6,523  	&     1 	  &  Tencent5    	& 230,103 	   &  100  \\
Tencent2 	&	25,844	   &	5        &  Tencent6      &   443,070    &   500   \\
Tencent3 	&	48,837 	   &	10      &  Tencent7 	  &  702,049     &   1000   \\
Tencent4 	&	169,209	  &    50	   &  Tencent8 	 & 2,767,344   &   5000  \\

\hline
\end{tabular}
\end{table} 
}

In this subsection, we first fix $k=5$, and evaluate the efficiency and scalability performance of different comparison methods using the Tencent dataset. We evaluate the performance by randomly extracting different (large and small) versions of the Tencent networks. %Table~\ref{tb:tencentdataset} lists statistics of the different Tencent sub-networks.
%Given the task of finding the top-5 similar vertices for each vertex in the network, we record the time of each method in the sampled Tencent sub-networks with different scales (Table in Figure~\ref{fig:efficiency}).
For TopSim and RoleSim, we only show the computational time for similarity search. For ReFex, Panther, and Panther++, we also show 
%record the time of both 
the computational time used for preprocessing.
% part and the total time of on-line query of all the vertices. 

%only has the on-line query stage and RoleSim only has the pre-computational stage.

%\vpara{Efficiency.}
Table~\ref{tb:efficiency} lists statistics of the different Tencent sub-networks and the efficiency performance of the comparison methods.
Clearly, our methods (both \model and Panther++)  are much faster than the comparison methods. For example, on the Tencent6 sub-network, which consists of 443,070 vertices and 5,000,000 edges, 
\model achieves a $390\times$ speed-up , compared to the fastest (ReFeX) of all the comparative methods.

Figure~\ref{subfig:speeduprefex} shows the speed-up of Panther++ compared to ReFeX on different scales of sub-networks. The speed-up is moderate when the size of the network is small ($|E| \le 1,000,000$); when continuing to increase the size of the network, the obtained speed-up is even superlinear. 
We conducted a result comparison between ReFeX and Panther++. 
The results of Panther++ are very similar to those of ReFex, though they decrease slightly when the size of the network is small. Figure~\ref{subfig:parameterk} shows the efficiency performance of Panther and Panther++ by varying the values of $k$ from 5 to 100. We can see that the time costs of Panther and Panther++ are not very sensitive to $k$. The growth of time cost is slow  when $k$ gets larger. This is because $k$ is only related to the time complexity of top-$k$ similarity search based on a heap structure. When $k$ gets larger, the time complexity approximates to $O(\bar{M} \log \bar{M})$ from $O(\bar{M})$, where $\bar{M}$ is the average number of co-occurred vertices on the same paths. We can also see that the time cost is not very stable when $k$ gets larger, because the paths are randomly generated, which results in different values of $\bar{M}$ each time.%From the results shown in Figure~\ref{fig:efficiency}, e can see that in the network consisting of 230,103 vertices and 1,000,000 edges, our method Panther++ achieves a XX$\times$ speedup than the fastest comparative methods. 
%We also present the ratio of the scores evaluated on the ground truth of structural holes between ReFex and Panther++ (see Section \ref{sec:accuracy} for details), and the ratio of time costs between ReFex and Panther++ in the networks with $|E|$ from $10^4$ to $500 \times 10^4$. The results in Figure~\ref{fig:efficiency} show that our method Panther++ is orders of magnitude faster than ReFex when the network grows larger, without the accuracy lost compared with ReFex.

From Table~\ref{tb:efficiency}, we can also see that RWR, TopSim and RoleSim cannot complete  top-$k$ similarity search for all vertices within a reasonable time when the number of edges increases to 500,000. ReFeX can deal with larger networks, but also fails when the edge number increases to 10,000,000. Our methods can scale up to handle very large networks with more than 10,000,000 edges. On average, \model only needs 0.0001 second to perform top-$k$ similarity search for each vertex in a large network. 
 
\hide{
\vpara{Scalability.}
In  the table of Figure~\ref{tb:efficiency}, we can see that when the network gets larger, some of comparative methods can not finish the task within several days, which are denoted as ``-'' in the table. Although TopSim save the time of pre-computational stage, it takes more than 120 hours for on-line query of all vertices when the number of edges is larger than 500,000. RoleSim calculates similarities between every pair of vertices in the pre-computational stage, which takes more than 72 hours when the number of edges is larger than 500,000. To improve the time efficiency of ReFex, we only extract the recursive features once, while the time cost still cannot be afforded when the number of edges is larger than 10,000,000. Our methods Panther and Panther++ only take XXX hours to return top-$k$ similar vertices for any vertex n the largest network with 300,000,000 vertices and 5,000,000,000 edges. The memory cost is linearly correlated with the number of edges, which is XXGB, 28.7GB, XXGB, 188GB, and 500GB in each network respectively.
}

\hide{
\begin{figure}
  \centering
  \graphicspath{ {./afbeeldingen/} }
  \graphicspath{ {./afbeeldingen/} }
  \includegraphics[width=0.35\textwidth]{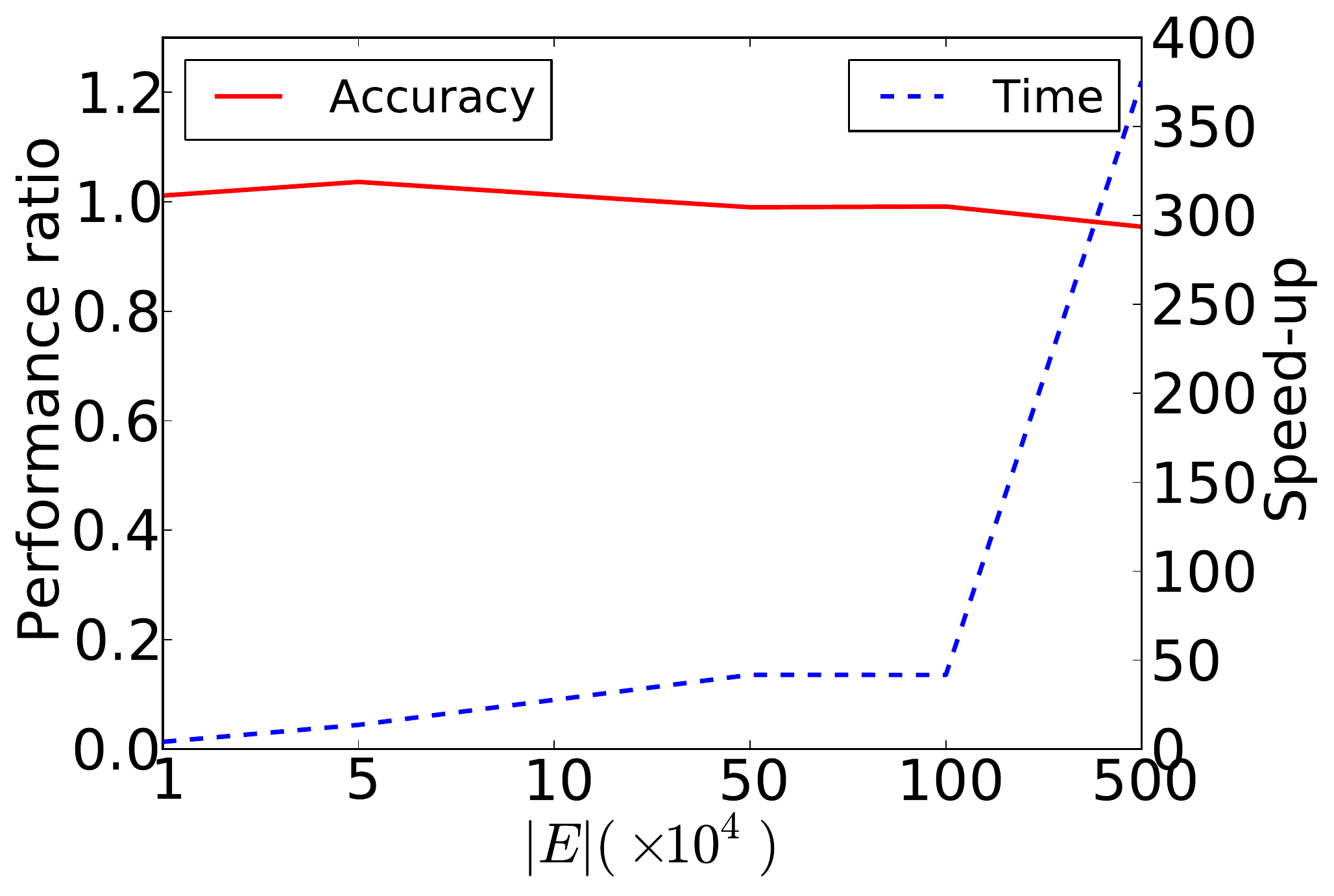}
  %\captionof{figure}{Fles 5}
\caption{\label{fig:speedupbak}  Performance ratio is calculated by $\frac{\text{Score(ReFex)}}{\text{Score(Panther++)}}$, where score is evaluated on the ground truth of structural holes (see Section~\ref{sec:accuracy} for details.). Speedup is calculated by $\frac{\text{Time(ReFex)}}{\text{Time(Panther++)}}$.}
\end{figure}
}

\begin{figure}
\centering
\mbox{
\hspace{-0.15in}
\subfigure[Speed-up]{\label{subfig:speeduprefex}
\includegraphics[width=0.245\textwidth]{Figures/timeaccuracy.pdf}
}
%\hspace{-0.1in}
\subfigure[Effect of $k$]{\label{subfig:parameterk}
\includegraphics[width=0.225\textwidth]{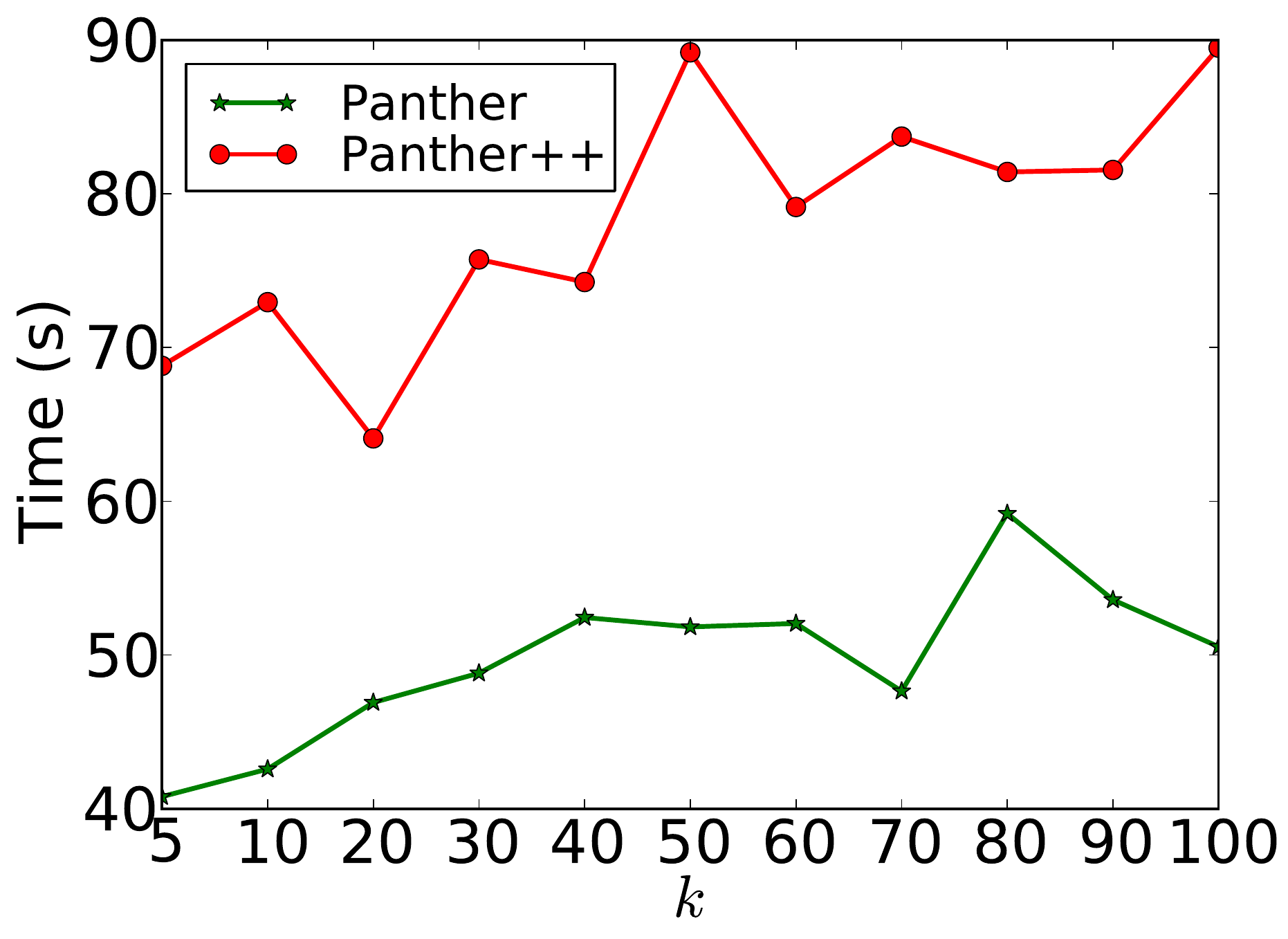}
}
}
%\vspace{-0.1in}
\caption{\label{fig:speedup} (a) Performance ratio is calculated by $\frac{\text{Score(ReFex)}}{\text{Score(Panther++)}}$, where score is evaluated by the application of structural hole spanner finding (see \S~\ref{sec:accuracy} for details.); Speed-up is calculated by $\frac{\text{Time(ReFex)}}{\text{Time(Panther++)}}$; (b) Effect of $k$ on the efficiency performance of Panther and Panther++.}
\end{figure}

\subsection{Accuracy Performance with Applications}
\label{sec:accuracy} 

\para{Identity Resolution.}
It is difficult to find a ground truth to evaluate the accuracy for similarity search. 
To quantitatively evaluate the accuracy of the proposed methods and compare with the other methods, we consider an application of \textit{identity resolution} on the co-author network. 
The idea is that we first use the authorship at different conferences to generate multiple co-author networks. An author may have  a corresponding vertex in each of the generated networks.
We assume that the same authors in different networks of the same domain are similar to each other. We anonymize author names in all the networks.
Thus given any two co-author networks, for example KDD-ICDM, we
perform a top-$k$  search to find similar vertices from ICDM for each vertex in KDD by different methods. 
If the returned $k$ similar vertices from ICDM by a method consists of the corresponding author of the query vertex from KDD, we say that the method hits a correct instance. 
A similar idea was also employed to evaluate similarity search in~\cite{gilpin:2013}.
Please note that the search is performed across two disconnected networks. 
%\para{Evaluation methods.}
%We evaluate Panther++ by resolving identity of authors from the KDD network to the authors in ICDM network~\cite{gilpin:2013}. For each author in both conferences, we select its vector calculated in KDD network and find the top-$k$ similar authors from ICDM network based on their vectors calculated in ICDM network. If the original author from KDD network is present in the set of $k$ most similar authors in ICDM network, we count the result as a match. In the same way, we resolve identity of authors from SIGIR to CIKM, and from SIGMOD to ICDE.  
%As the generated networks are disconnected, 
Thus, RWR, TopSim and RoleSim cannot be directly used for solving the task. ReFex calculates a vector for each vertex, and can be used here.
% in the same way to solve the problem. 
Additionally, we also compare with several other methods including Degree, Clustering Coefficient, Closeness, Betweenness and Pagerank. 
In our methods, Panther is not applicable to this situation. We only evaluate Panther++ here.
%For one given vertex in a network, each method matches the top-$k$ vertices  based on each of these metrics. In addition, 
Additionally, we also show the performance of random guess.
%compare the method Random, which selects the top-$k$ vertices randomly.
%
 %We also employ a similar way used in~\cite{jeh:KDD06} to quantitatively evaluate the performance of  Panther and Panther++.

%\vpara{Performance of Panther++.}
Figure~\ref{fig:nework_match} presents the performance of Panther++ on the task of identity resolution across co-author networks. 
%TopSim and RoleSim are omitted since they cannot be directly used to match vertices across two different networks. 
We see that Panther++ performs the best on all three datasets. ReFex performs comparably well; however, it is not very stable. In the SIGMOD-ICDE case, it performs the same as Panther++, while in the KDD-ICDM and SIGIR-CIKM cases, it performs worse than Panther++, when $k\le 60$.

\begin{figure*}
\centering
\subfigure[KDD-ICDM]{\label{subfig:network_match_kdd_icdm}
\includegraphics[width=0.275\textwidth]{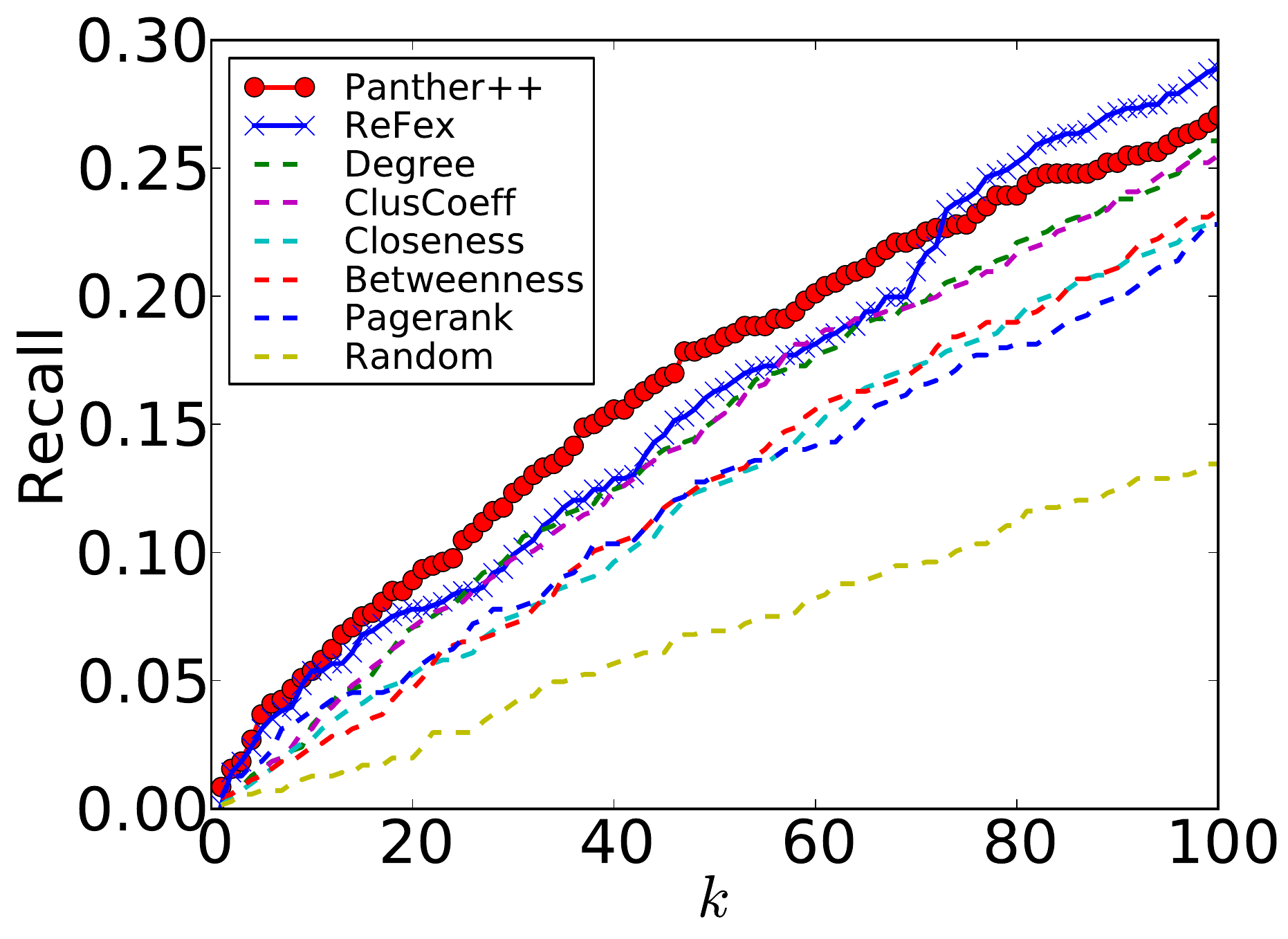}
}
\hspace{0.155in}
\subfigure[ SIGIR-CIKM]{\label{subfig:network_match_sigir_cikm}
\includegraphics[width=0.275\textwidth]{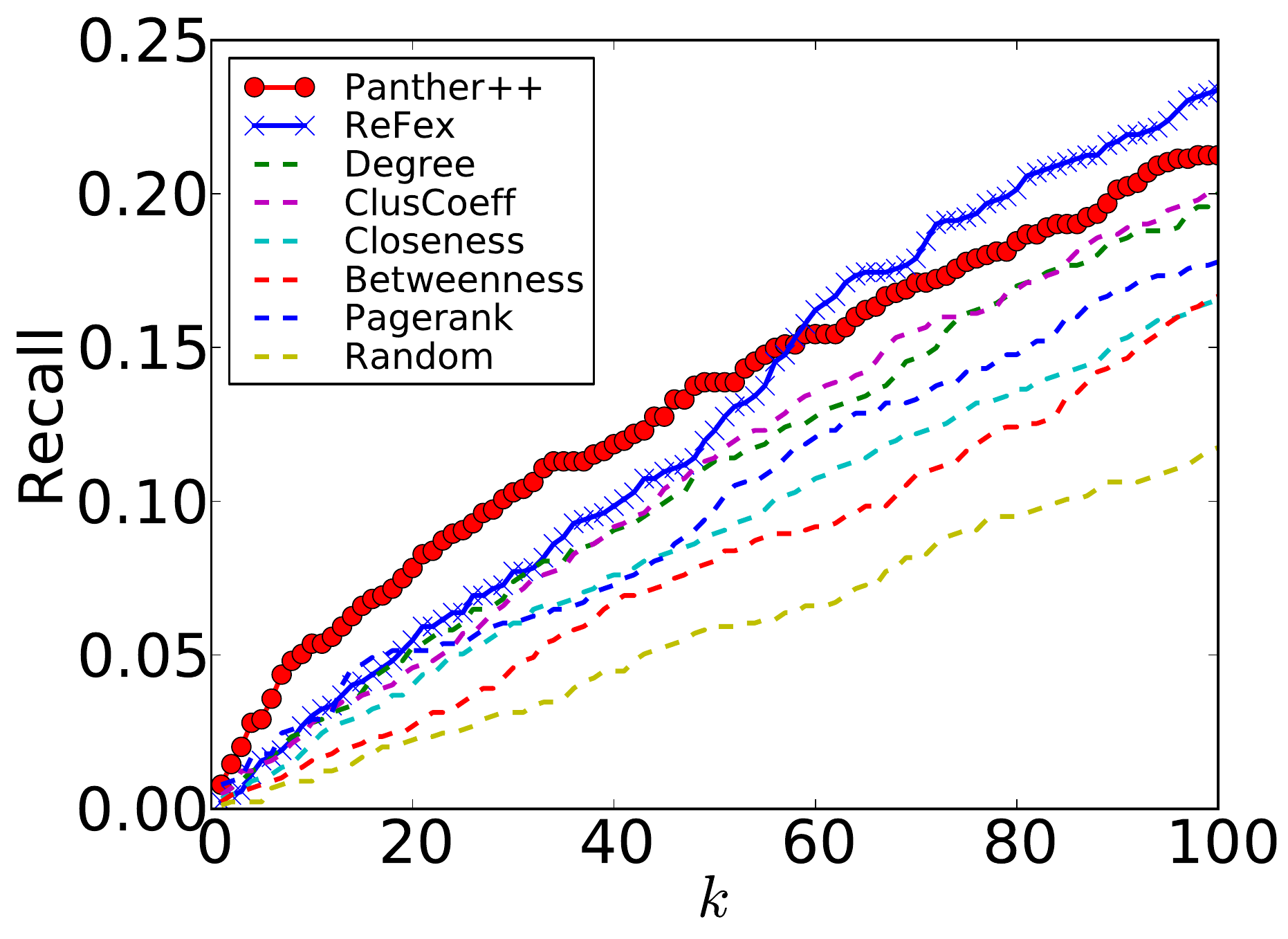}
}
\hspace{0.155in}
\subfigure[SIGMOD-ICDE]{\label{subfig:network_match_sigmod_icde}
\includegraphics[width=0.275\textwidth]{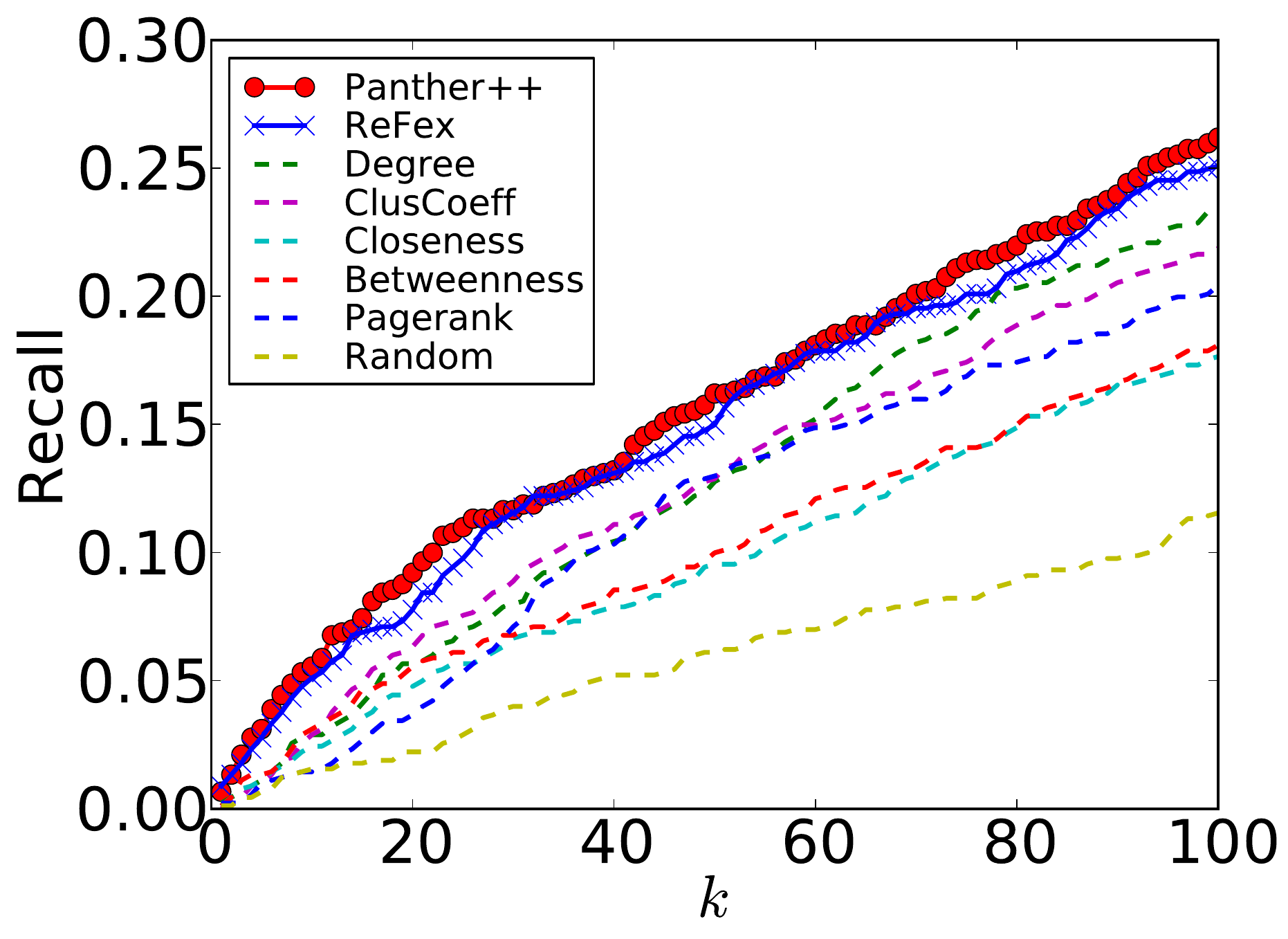}
}
%\vspace{-0.155in}
\caption{\label{fig:nework_match} Performance of identity resolution across two networks with different comparison methods.}
%Panther++ on the task of identity resolution across two networks.}
\end{figure*}

\vpara{Approximating Common Neighbors.}
We evaluate how Panther can approximate the similarity based on common neighbors.
%For evaluating the performance of Panther,  
The evaluation procedure is described as follows:
\begin{enumerate}
\item For each vertex $u$ in the seed set $S$, generate top $k$ vertices  $\text{Top}_{A,k}(u)$ that are the most similar to $u$ by the algorithm $A$.
\item For each vertex $v \in \text{Top}_{A, k}(u)$, calculate $g(u,v)$, where $g$ is a coarse similarity measure defined as the ground truth. Define $f_{A,k} = \sum_{u}\sum_{v}g(u,v)$.
\item Similarly, let $f_{R,k}$ denotes the result of a random algorithm. 
%That is $\text{Top}_{R,k}(u)$ contains $k$ random vertices selected for $u$.
\item Finally, we define the score for algorithm $A$ as $\text{score}(A,k) = \frac{f_{A,k} - f_{R,k}}{|S| \times k}$, which represents the improvement of algorithm $A$ over a random-based method. 
%Obviously, the higher the score, the more effective $A$ is.
\end{enumerate}

Specifically, we define $g(u,v)$ to be the number of common neighbors between $u$ and $v$ on each dataset.
%, as Panther is based on the principle that two vertices are considered structurally equivalent if they have many common neighbors.
%, Panther should at least rank higher the vertices with more common neighbors.

Figure~\ref{fig:pathsim} shows the performance of Panther evaluated on the ground truth of common neighbors in different networks. Some baselines such as RWR and RoleSim are ignored on different datasets, because they cannot complete  top-$k$ similarity search for all vertices within a reasonable time. It can be seen that Panther performs better than any other methods on most datasets. Panther++, ReFex and Rolesim perform worst since they are not devised to address the similarity between near vertices. Our method Panther performs as good as TopSim, the top-$k$ version of SimRank, because they both based on the principle that two vertices are considered structuraly equivalent if they have many common neighbors in a network. However, according to our previous analysis, TopSim performs much slower than Panther.

\begin{figure}
\centering
\mbox{
\hspace{-0.1in}
\subfigure[KDD]{\label{subfig:kdd_cn}
\includegraphics[width=0.23\textwidth]{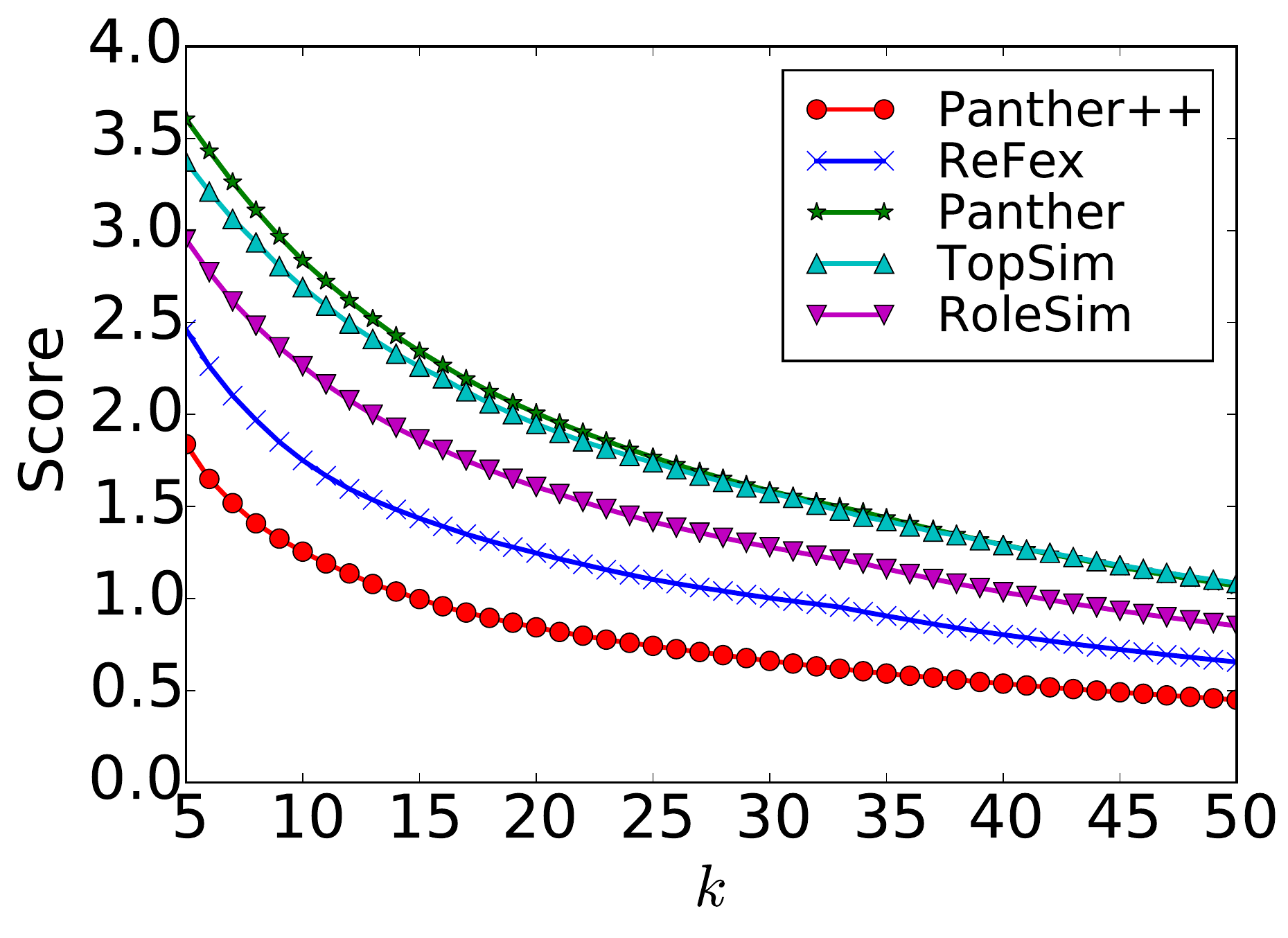}
}
%\hspace{-0.15in}
\subfigure[SIGIR]{\label{subfig:sigir_cn}
\includegraphics[width=0.23\textwidth]{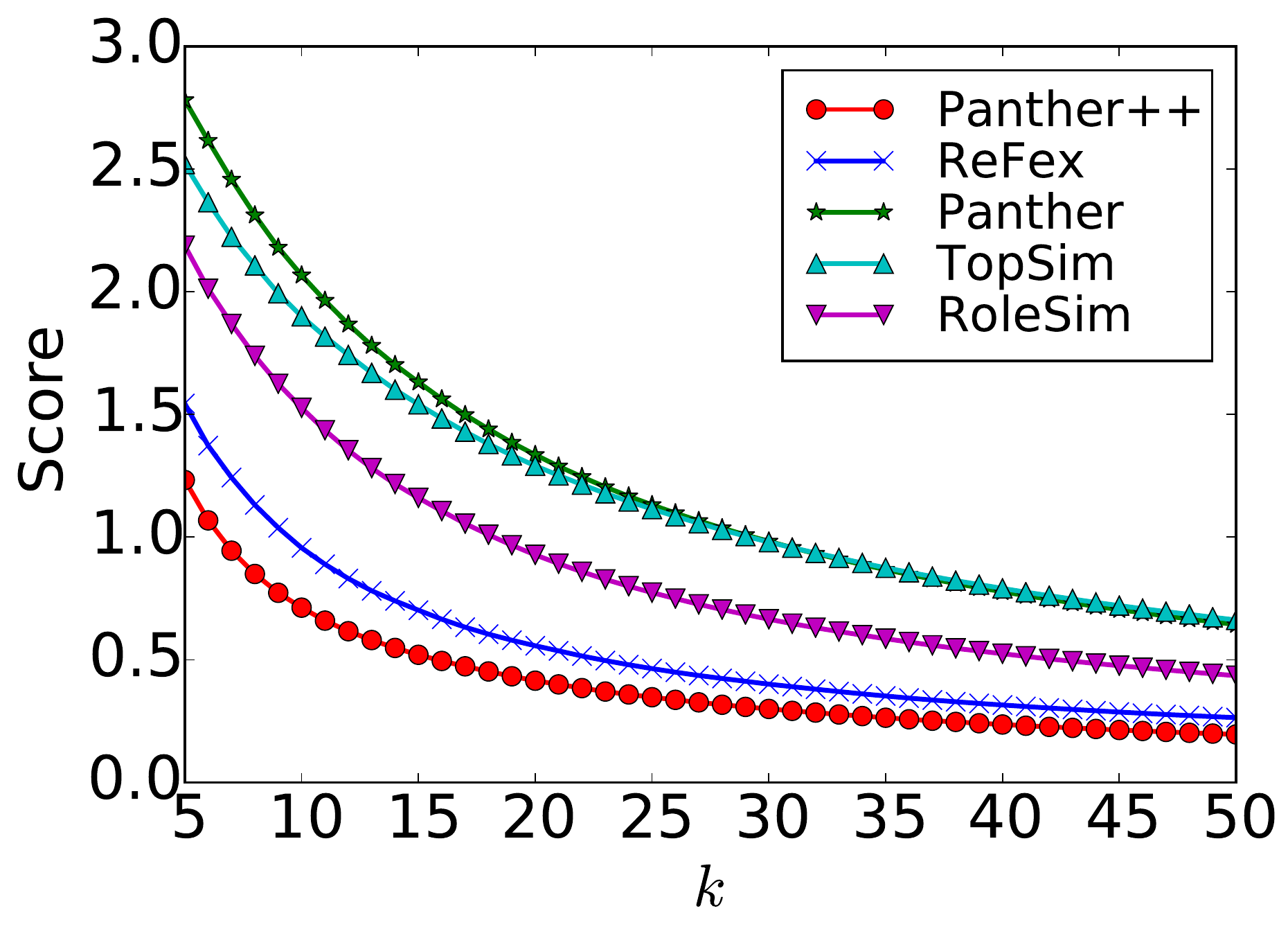}
}
}
\mbox{
\hspace{-0.1in}
\subfigure[Twitter]{\label{subfig:twitter}
\includegraphics[width=0.23\textwidth]{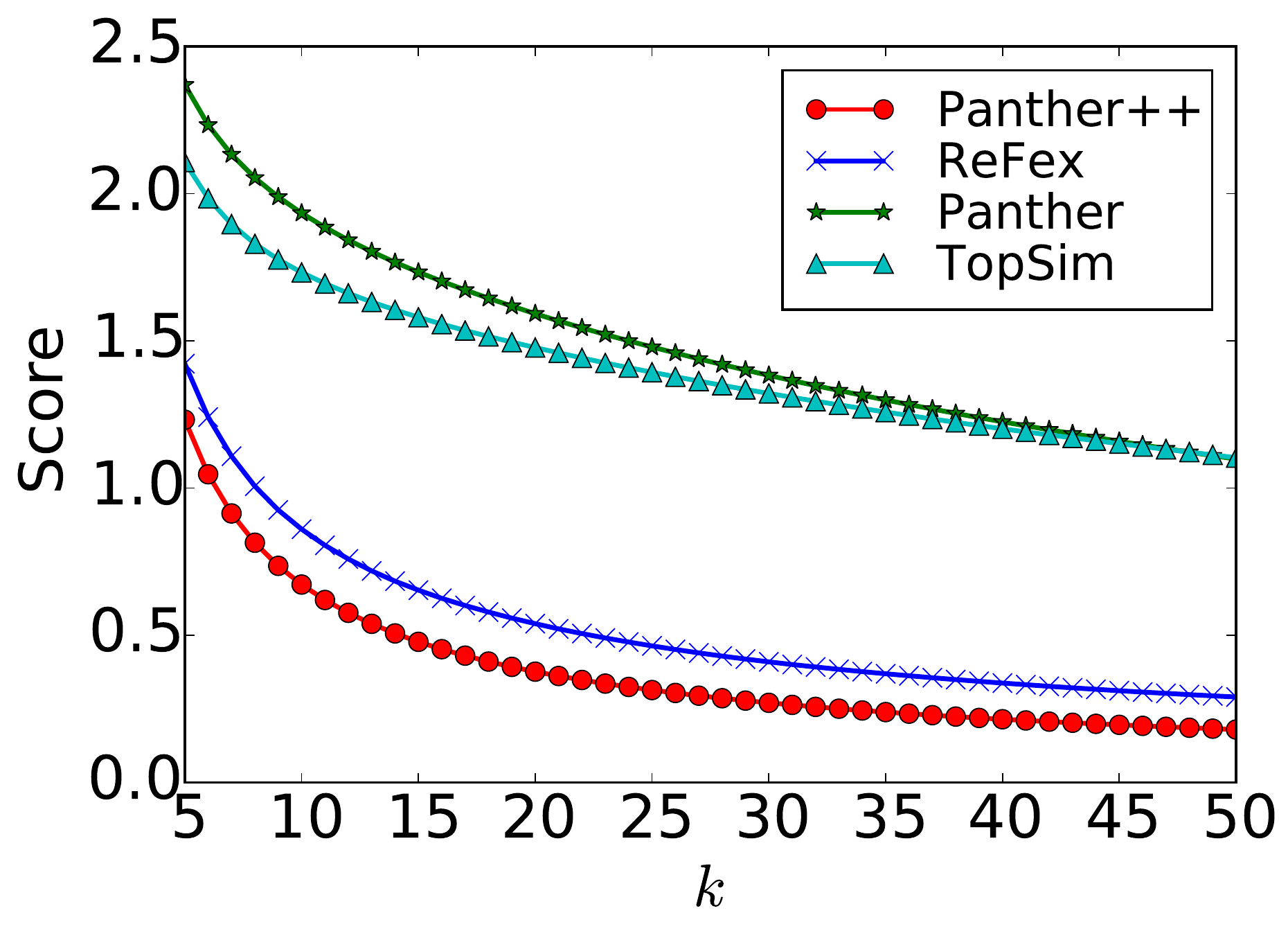}
}
%\hspace{-0.15in}
\subfigure[Mobile]{\label{subfig:mobile}
\includegraphics[width=0.23\textwidth]{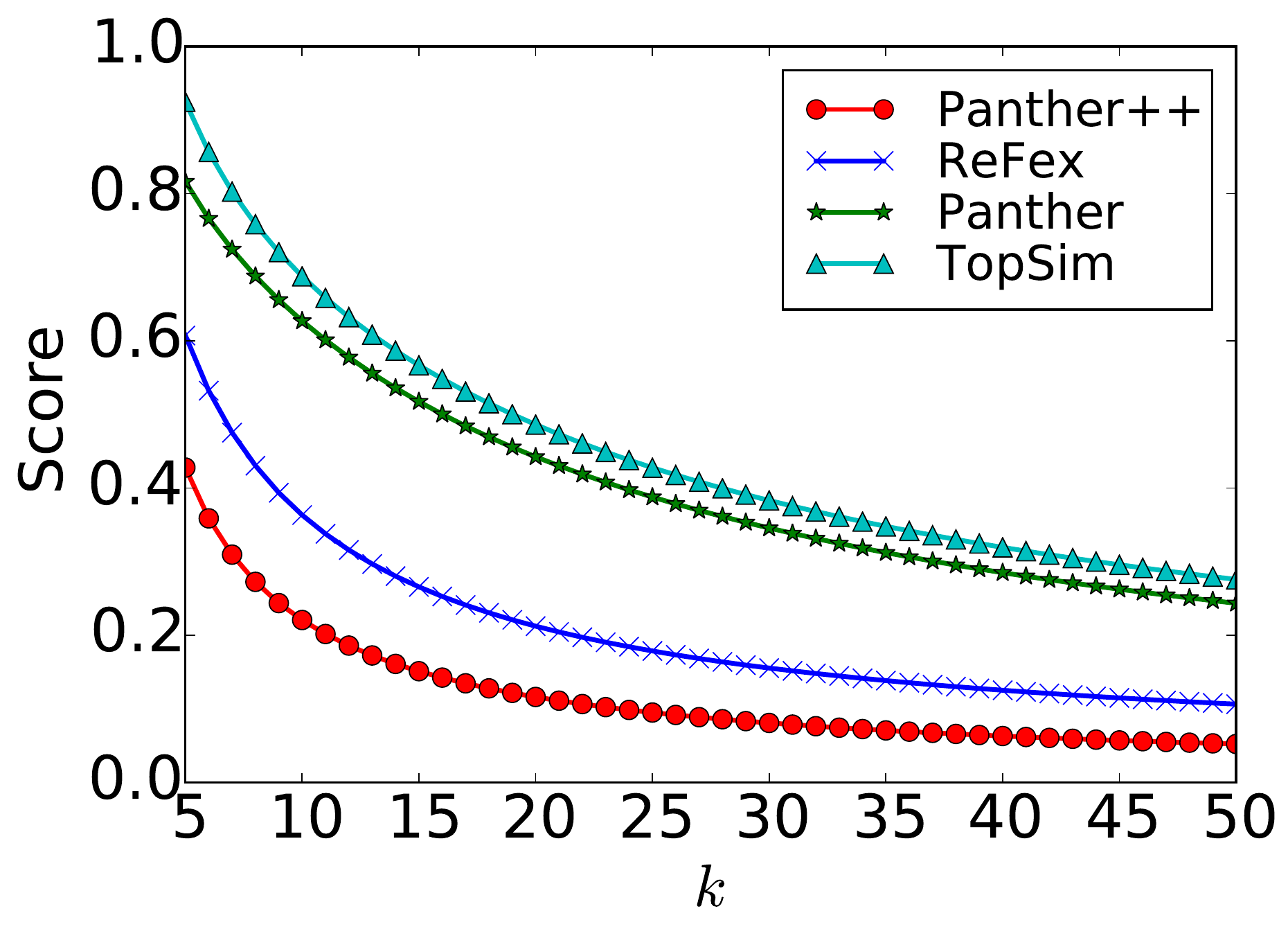}
}
}
\caption{\label{fig:pathsim} Performance of Panther evaluated on the ground truth of common neighbors.}
\end{figure}

%\vpara{Senior Program Member.}
\vpara{Top-$k$ Structural Hole Spanner Finding.}
The other application we consider in this work is top-$k$ structural hole spanner finding.
The theory of structural holes~\cite{burt:2009structural} suggests that, in social networks, individuals would benefit from filling the ``holes'' between people or groups that are otherwise disconnected. 
The problem of finding top-$k$ structural hole spanners was proposed in~\cite{LouTang:13WWW}, which also shows that 1\% of users who span structural holes control 25\% of the information diffusion (retweeting) in Twitter. 

Structural hole spanners are not necessarily connected, but they share the same structural patterns such as local clustering coefficient and centrality.
Thus, the idea here is to feed a few seed users to the proposed Panther++, and use  it to find other structural hole spanners.
For evaluation, we use network constraint~\cite{burt:2009structural} to obtain the structural hole spanners in Twitter and Mobile, and use this as the ground truth. Then we apply different methods---Panther++, ReFex, Panther, and SimRank---to retrieve top-$k$ similar users for each structural hole spanner. If an algorithm can find another structural hole spanner in the top-$k$ returned results, then it makes a correct search.  
We define $g(u,v) =1$, if both $u$ and $v$ are structural hole spanners, and $g(u,v) = 0$ otherwise.

\hide{
For evaluating the performance of Panther++, we use different ground truths for these datasets. In the co-author network, we define $g(u,v) = 1$ if both $u$ and $v$ are senior program committees of the same conference, and $g(u,v) = 0$ otherwise.
The underlying assumption is that the network structures of senior program committees are similar to each other. Earlier research~\cite{newman:2006} conducted in co-author networks observes that vertices with high degree or high betweenness would be likely to be high-impact authors. Usually, senior program committees are high-impact authors of a chosen field, thus to some degree share the same structural characteristics.
In the mobile, Twitter and Tencent networks, the ground truth is based on structural holes: $g(u,v) =1$ if both $u$ and $v$ are structural holes, and $g(u,v) = 0$ otherwise.
The structural holes are the kind of users acting as bridges in a network. We treat the users with top 20\% degree and bottom 20\% network constraint. Network constraint is widely used to measure the degree of structural holes~\cite{burt:2009structural} and is calculated as follows:

\beq{
    NC(v_i) = \frac{1}{\mathcal{N}(v_i)} (\sum_{ v_j\in \mathcal{N}(v_i)}  \sum_{v_k \in \mathcal{N}(v_j)} w_{j, k} \times w_{k,i})^2. \nonumber
}
}

Figure~\ref{fig:pathvec} shows the performance of 
%Panther++ evaluated on the ground truth of senior program committees in co-author networks and 
comparison methods for finding structural hole spanners in different networks. 
Panther++ achieves a consistently better performance than the comparison methods by varying the value of $k$.
TopSim, the top-$k$ version of SimRank seems inapplicable to this task. This is reasonable, as the underlying principle of SimRank is to find vertices with more connections to the query vertex.

%We can see from the results that Panther++ performs best on all the datasets. We do not plot the results of RoleSim in Twitter and Mobile networks because we cannot afford the computational costs of the algorithm in such large networks.
%These results indicate that Panther++ can effectively capture the structural characteristics, and thus reflect the network position or role of a vertex.

\begin{figure}
\centering
\mbox{
\hspace{-0.1in}

\subfigure[KDD]{\label{subfig:kdd_spc}
\includegraphics[width=0.23\textwidth]{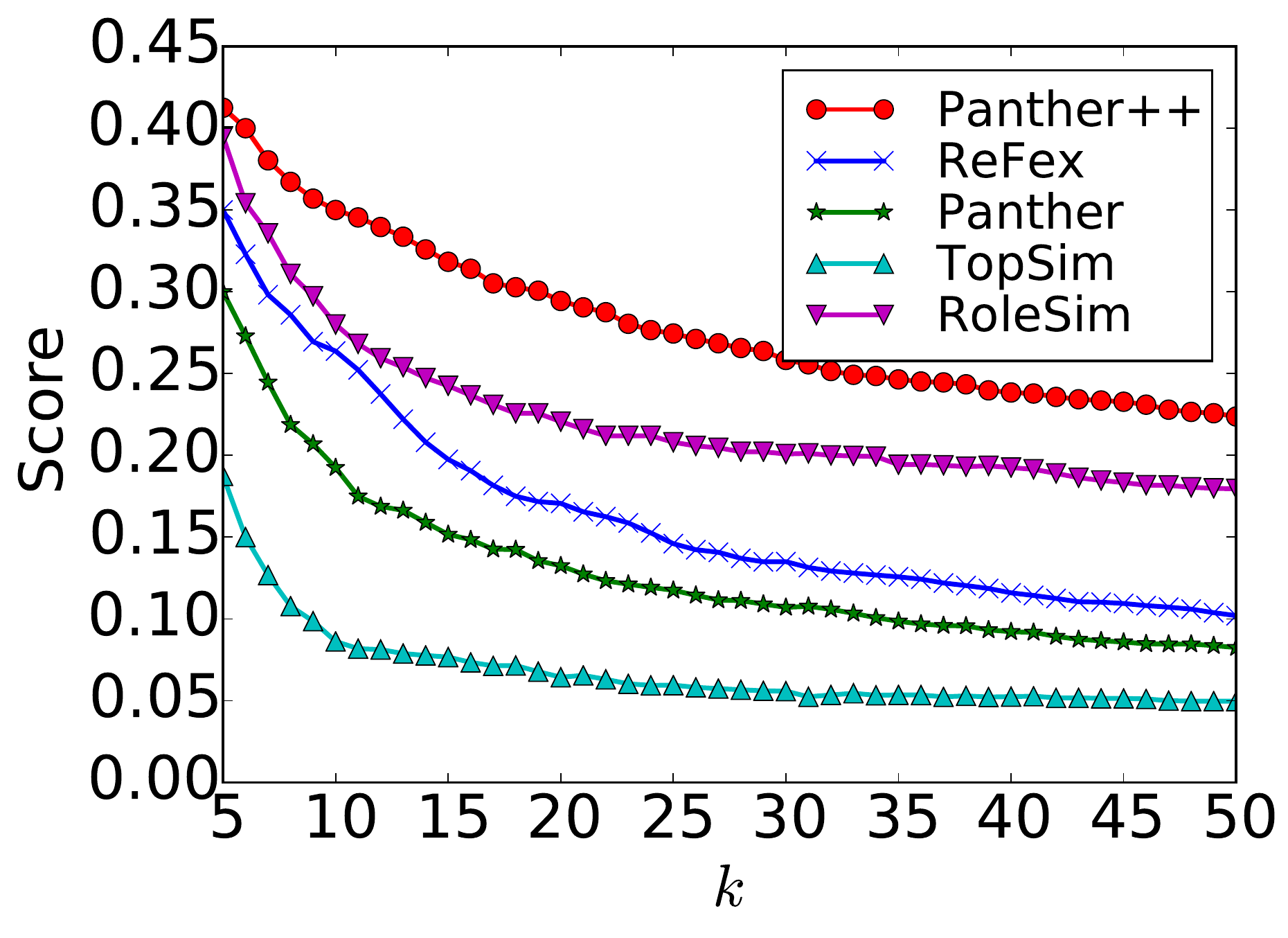}
}
%\hspace{-0.15in}
\subfigure[SIGIR]{\label{subfig:sigir_spc}
\includegraphics[width=0.23\textwidth]{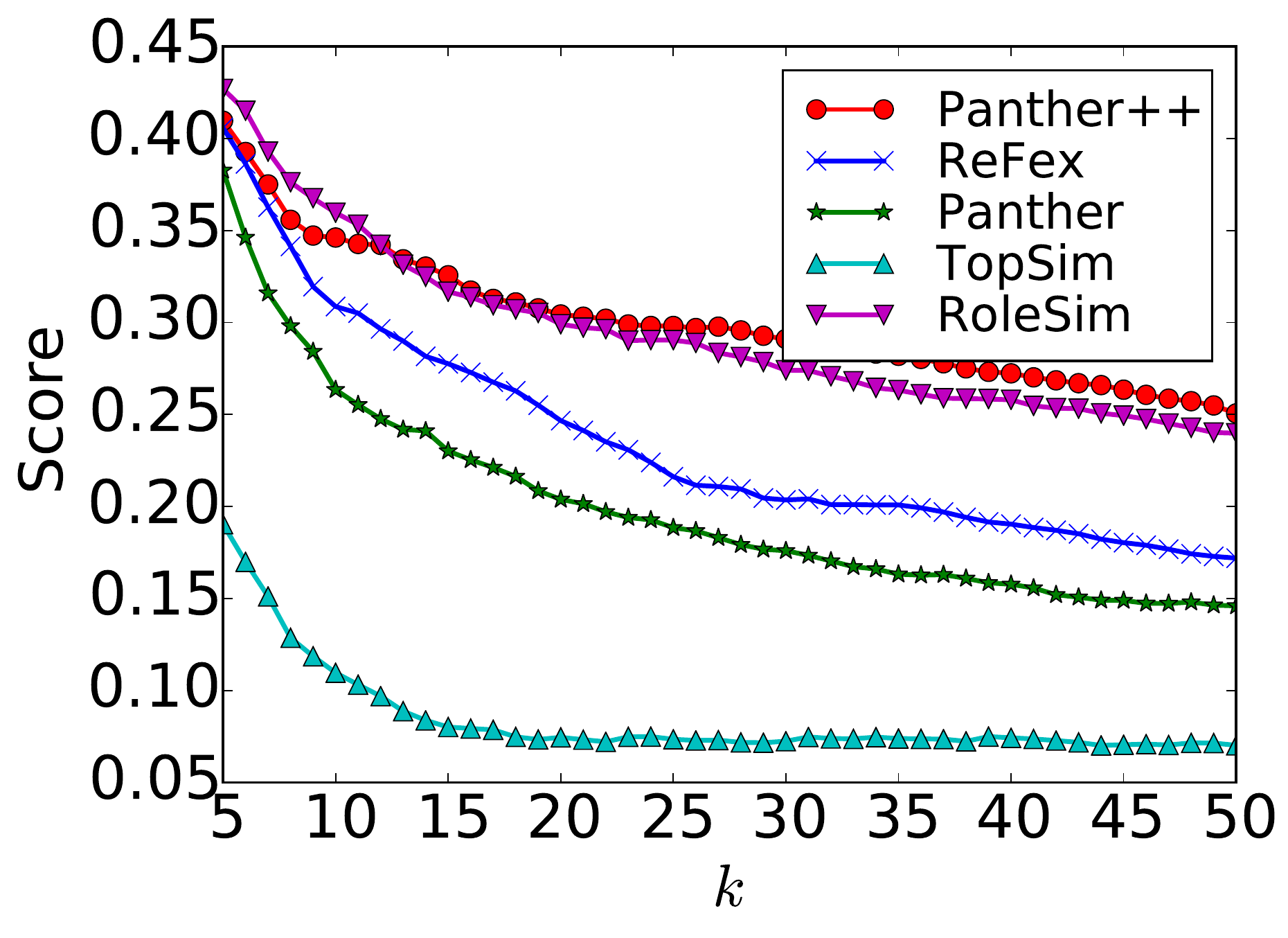}
}
}
\mbox{
\hspace{-0.1in}
\subfigure[Twitter]{\label{subfig:twitter_spc}
\includegraphics[width=0.23\textwidth]{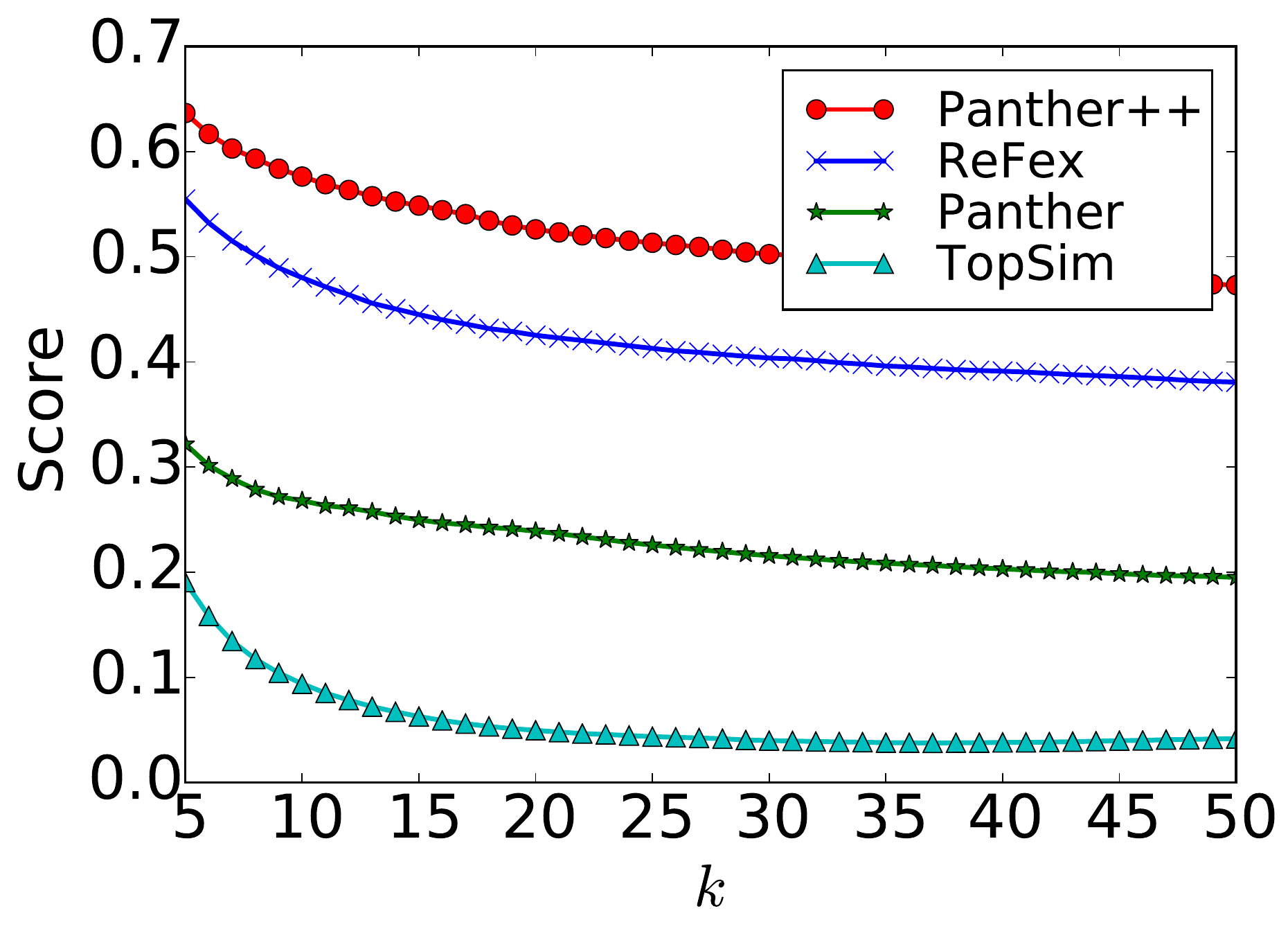}
}
%\hspace{-0.1in}
\subfigure[Mobile]{\label{subfig:mobile_spc}
\includegraphics[width=0.23\textwidth]{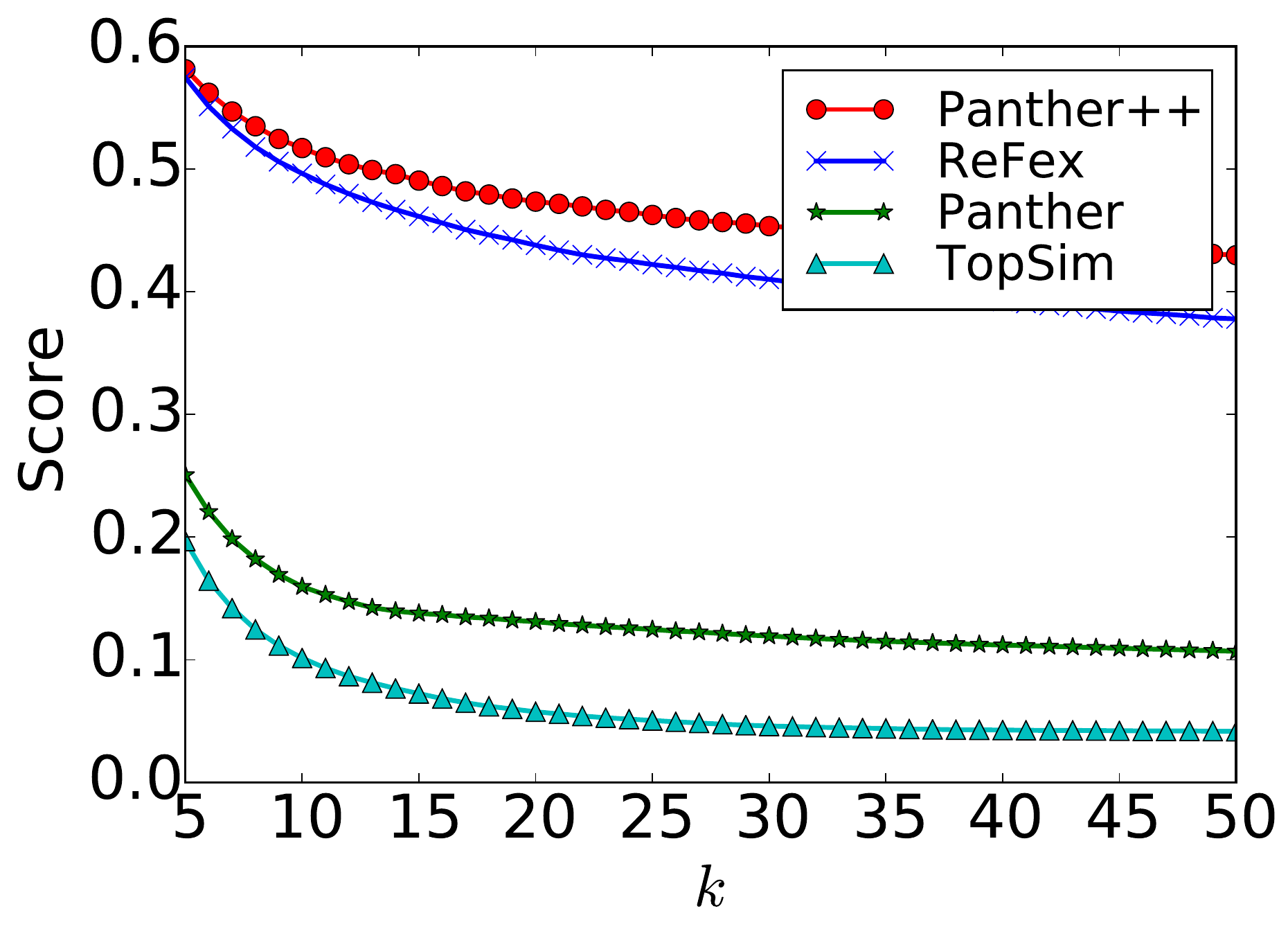}
}
}
%\vspace{-0.15in}
\caption{\label{fig:pathvec} Performance of mining structural hole spanners on the Twitter and Mobile networks with different methods.}
%Panther++ evaluated on the ground truth of senior program committees or structural holes.}
\end{figure}

\subsection{Parameter Sensitivity Analysis}

We now discuss how  different parameters influence the performance of our methods.

\vpara{Effect of Path Length $T$.}
%The path length $T$ directly affects the performance of Panther and Panther++.
%Figure~\ref{fig:T_pathsim} shows the accuracy performance of Panther on the ground truth of common neighbors by varying the values of path length $T$ as 2, 5, 10, 20, 50 and 100. 
%We can see that Panther gets better with the increase of path length $T$.  
Figure~\ref{fig:T_pathvec} shows the accuracy performance of Panther++ for mining structural holes by varying the path length $T$ as 2, 5, 10, 20 , 50 and 100.
A too small $T (< 5)$ would result in inferior performance. 
On Twitter, when increasing its value up to 5, it almost becomes stable. 
On Mobile, the situation is a bit complex, but in general $T=5$ seems to be a good choice.
% The result is rather varied and there is no universal law that can capture the influence of path $T$ on the performance. However, path length $T = 5$ performs consistently well on all the datasets.

\hide{
\begin{figure}
\centering
\subfigure[KDD]{\label{subfig:kdd_T_pathsim}
\includegraphics[width=0.23\textwidth]{Figures/kdd_T_neighborsim.pdf}
}
%\hspace{-0.15in}
\subfigure[SIGIR]{\label{subfig:sigir_T_pathsim}
\includegraphics[width=0.23\textwidth]{Figures/sigir_T_neighborsim.pdf}
}
\mbox{
\hspace{-0.10in}
\subfigure[Twitter]{\label{subfig:twitter_T_pathsim}
\includegraphics[width=0.23\textwidth]{Figures/twitter2_T_neighborsim.pdf}
}
%\hspace{-0.10in}
\subfigure[Mobile]{\label{subfig:mobile_T_pathsim}
\includegraphics[width=0.23\textwidth]{Figures/mobile2_T_neighborsim.pdf}
}
}
\hspace{-0.15in}
\caption{\label{fig:T_pathsim} Effect of path length $T$ on the performance of Panther.}
\end{figure}
}

\begin{figure}
\centering
%\subfigure[KDD]{\label{subfig:kdd_T_pathvec}
%\includegraphics[width=0.23\textwidth]{Figures/kdd_T_structuresim.pdf}
%}
%\hspace{-0.15in}
%\subfigure[SIGIR]{\label{subfig:sigir_T_pathvec}
%\includegraphics[width=0.23\textwidth]{Figures/sigir_T_structuresim.pdf}
%}
\mbox{
\hspace{-0.1in}
\subfigure[Twitter]{\label{subfig:sigmod_T_pathvec}
\includegraphics[width=0.23\textwidth]{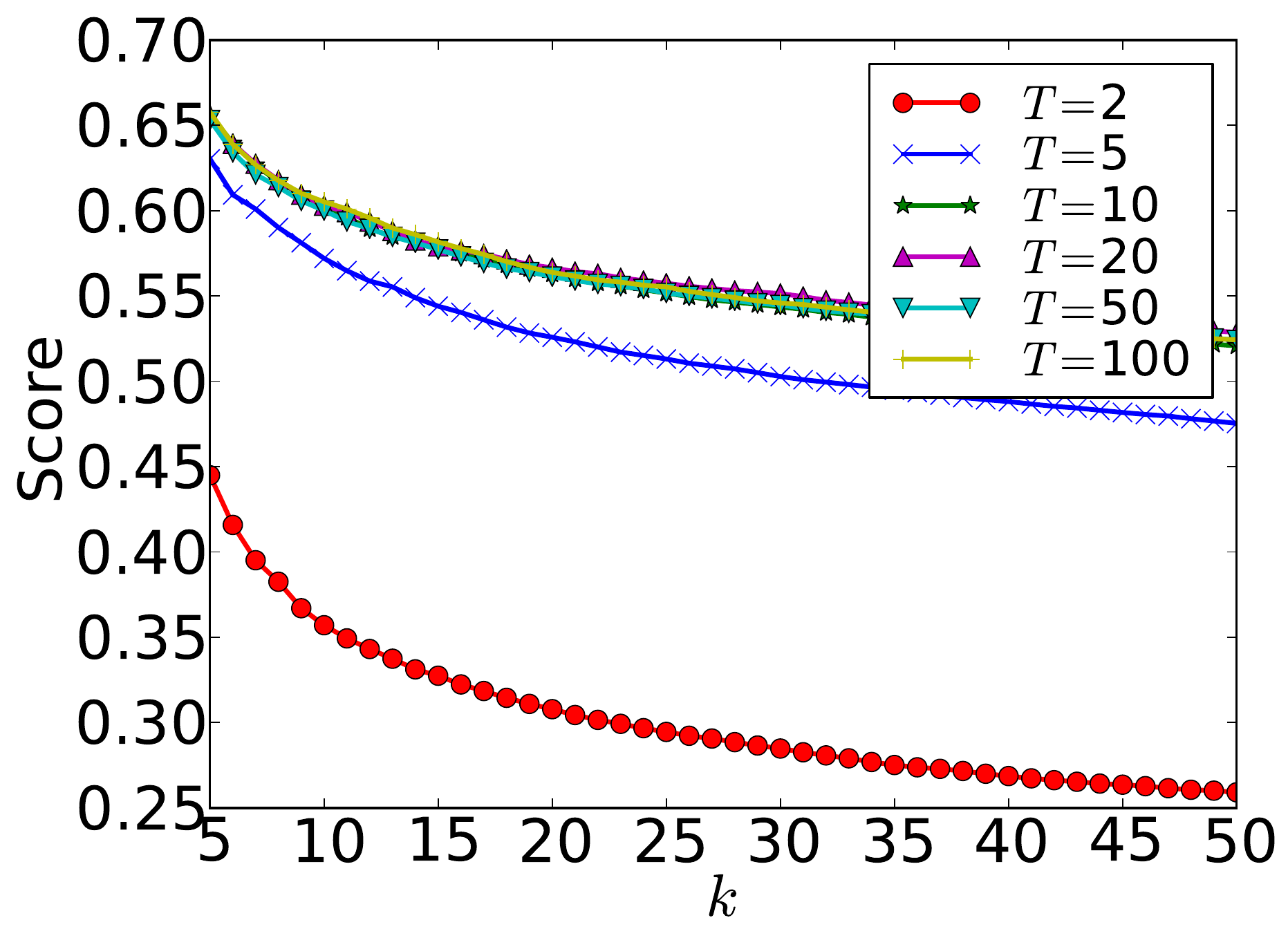}
}
%\hspace{-0.15in}
\subfigure[Mobile]{\label{subfig:twitter_T_pathvec}
\includegraphics[width=0.23\textwidth]{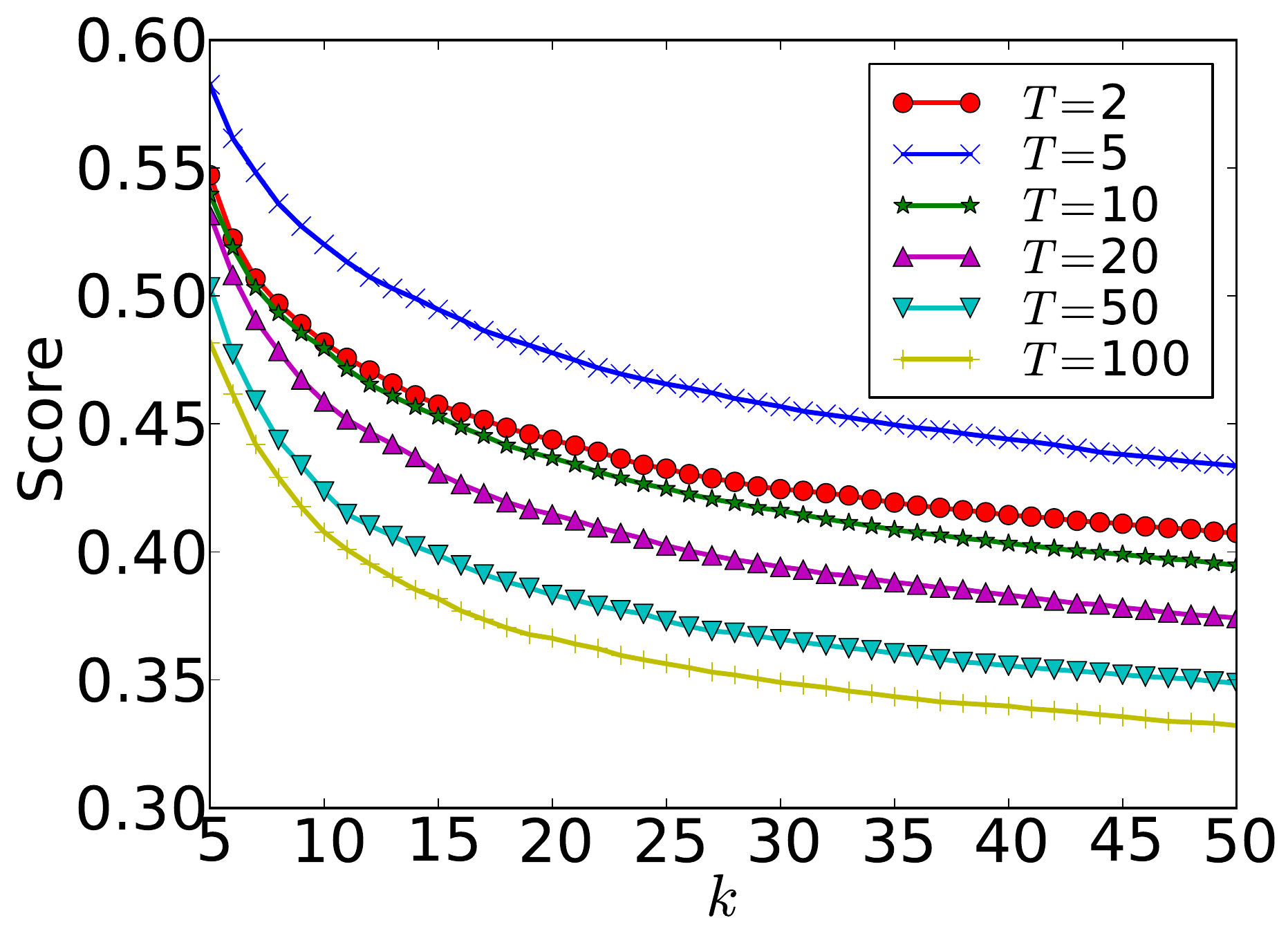}
}
}
%\vspace{-0.15in}
\caption{\label{fig:T_pathvec} Effect of path length $T$ on the accuracy performance of Panther++.}
\end{figure}

\vpara{Effect of Vector Dimension $D$.}
%The vector dimension $D$ only affects the result of Panther++.
Figure~\ref{fig:D_pathvec} shows the accuracy performance of Panther++ 
%on the ground truth of senior program committees in co-author networks and 
for mining structural hole spanners by varying vector dimension $D$ as 2, 5, 10, 20, 50 and 100. Generally speaking, the performance gets better when $D$ increases and it remains the same after $D$ gets larger than 50. This is reasonable,
%As we said earlier, we use the top-$k$ similarities calculated by
as Panther estimates the distribution of a vertex linking to the other vertices. Thus, the higher the vector dimension, the better the approximation. Once the dimension exceeds a threshold, 
 %nothing significant will be added, so 
 the performance gets stable. 
 %(Refer to Principal Component Analysis (PCA) for the same idea.)

\vpara{Effect of Error Bound $\varepsilon$.}
%The error-bound $\varepsilon$ affects the performance of both Panther and Panther++.
Figure~\ref{fig:Epsilon} shows the accuracy performance of Panther and Panther++ on the Tencent networks with different scales by varying error-bound $\varepsilon$ from 0.06 to 0.0001. We evaluate how Panther can estimate the similarity based on common neighbors.
%For evaluating the performance of Panther, 
Specifically, we use the same evaluation methods as structural hole spanner finding and define $g(u,v)$ to be the number of common neighbors between $u$ and $v$ on each dataset.
%We calculate the common neighbors of those vertices with other vertices and only select the structural holes within the sampled 10,000 vertices; Finally, we evaluate the top-$k$ results of PathSim and PathVec on the sampled 10,000 vertices. In this way, the scores of PathVec is not accurate, and thus decrease with the increase  of network scale. (???)
We  see that when the ratio $\frac{|E|}{(1/\varepsilon)^2}$ ranges from 5 to 20, scores of Panther are almost convergent on all the datasets. And when the ratio
$\frac{|E|}{(1/\varepsilon)^2}$ ranges from 0.2 to 5, the scores of Panther++ are almost convergent on all the datasets. Thus we can reach the conclusion that the value of $(1/\varepsilon)^2$ is almost linearly positively correlated with the number of edges in a network. Therefore we can empirically estimate $\varepsilon =   \sqrt{1/|E|}$ in our experiments.

\begin{figure}
\centering
%\subfigure[KDD]{\label{subfig:kdd_D_pathvec}
%\includegraphics[width=0.23\textwidth]{Figures/kdd_D_structuresim.pdf}
%}
%\hspace{-0.15in}
%\subfigure[SIGIR]{\label{subfig:sigir_D_pathvec}
%\includegraphics[width=0.23\textwidth]{Figures/sigir_D_structuresim.pdf}
%}
\mbox{
\hspace{-0.1in}
\subfigure[Twitter]{\label{subfig:twitter_D_pathvec}
\includegraphics[width=0.23\textwidth]{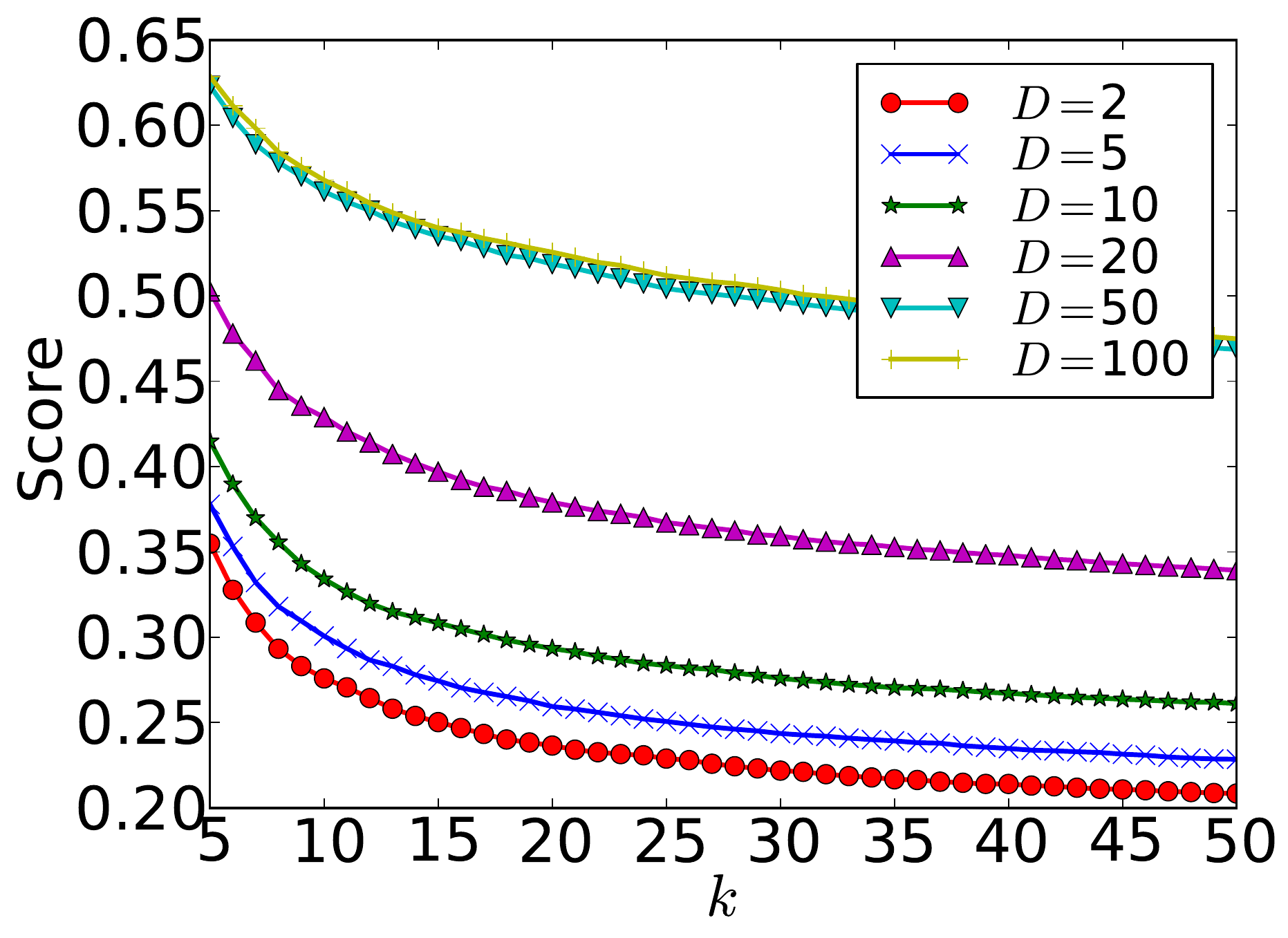}
}
%\hspace{-0.15in}
\subfigure[Mobile]{\label{subfig:mobile_D_pathvec}
\includegraphics[width=0.23\textwidth]{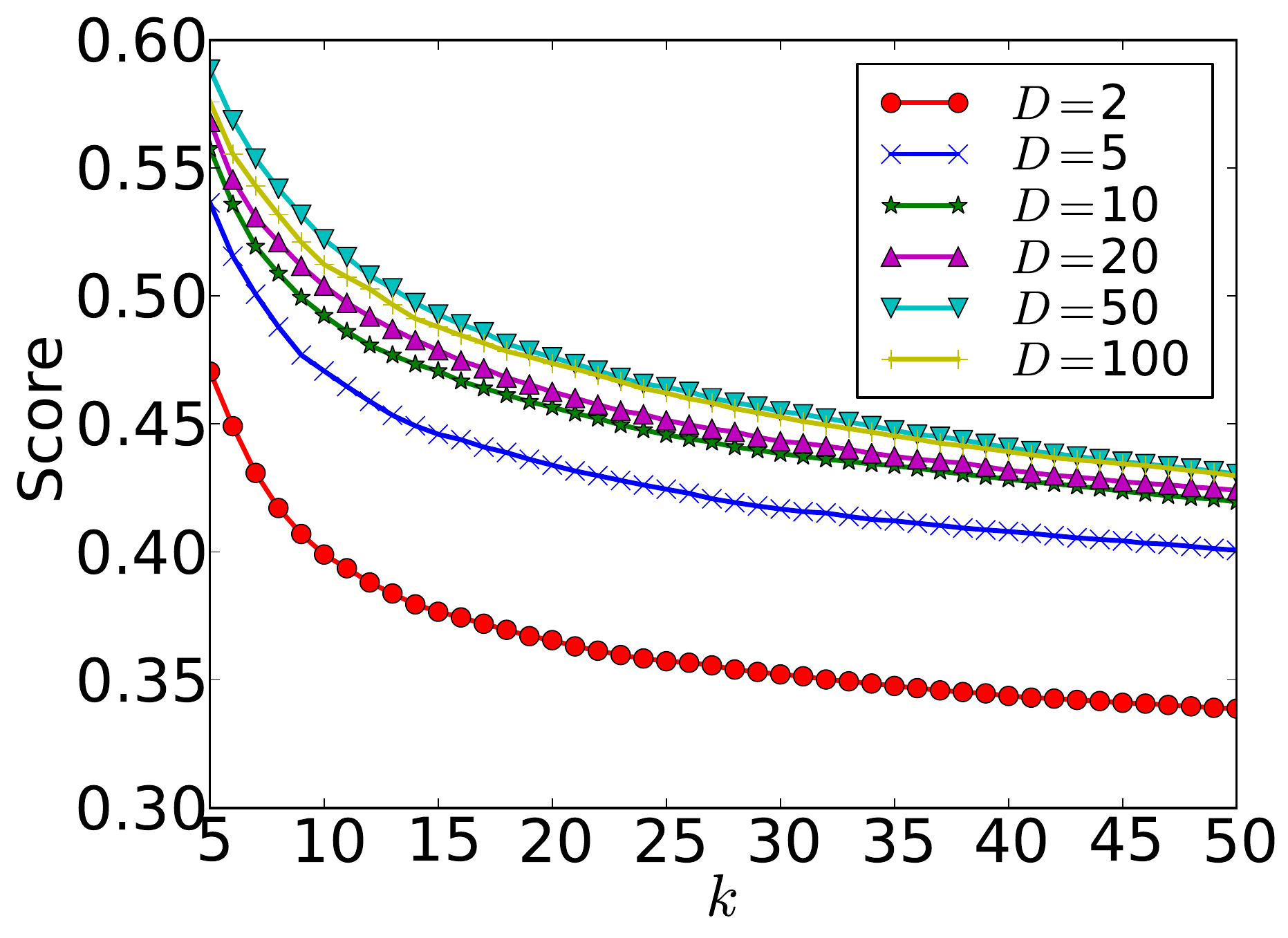}
}
}
%\vspace{-0.15in}
\caption{\label{fig:D_pathvec} Effect of vector dimension $D$ on the accuracy performance of Panther++.}
\end{figure}

\begin{figure}
\centering
\mbox{
\hspace{-0.1in}
\subfigure[Panther]{\label{subfig:epsilon_neighobrsim}
\includegraphics[width=0.23\textwidth]{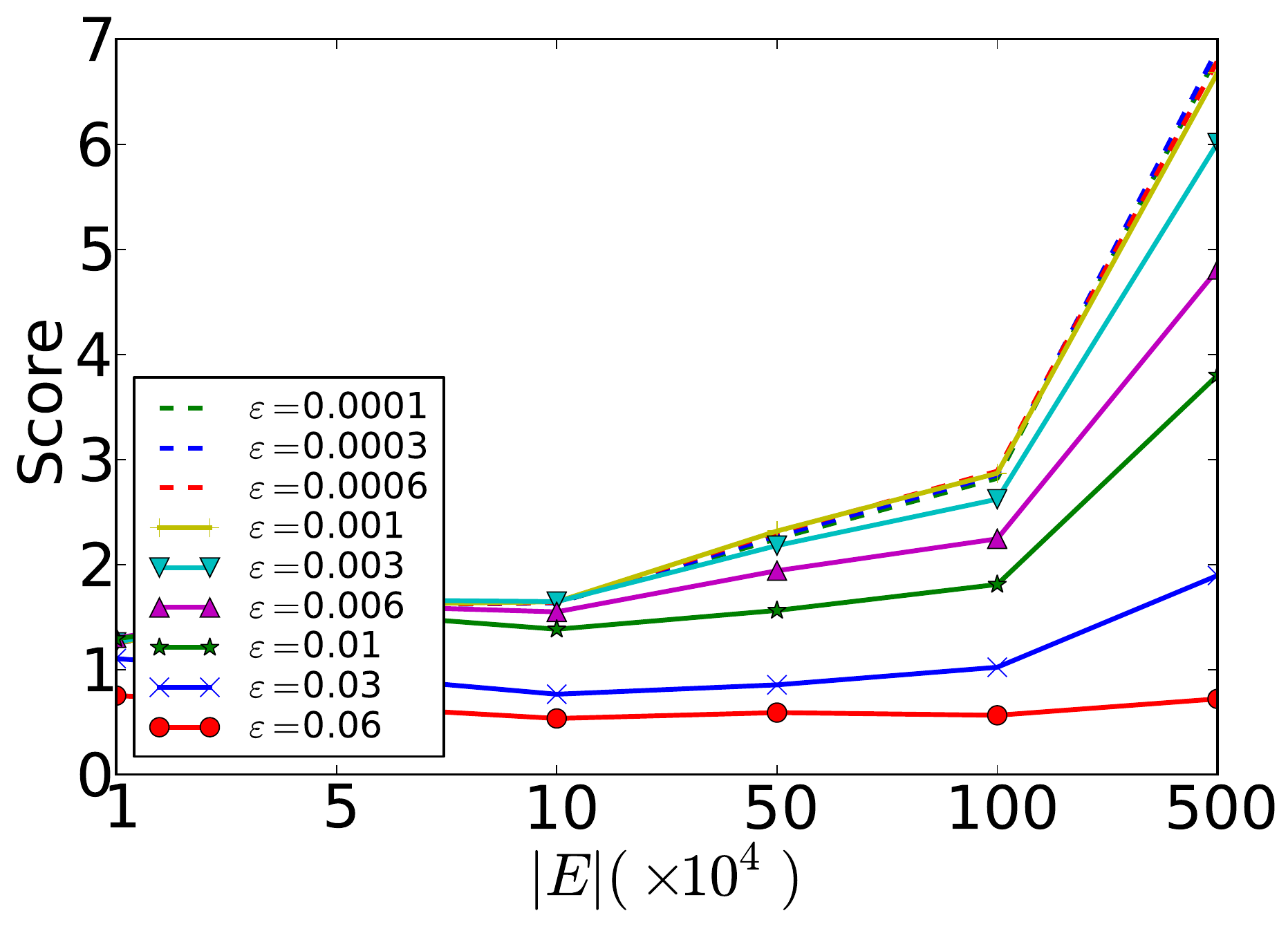}
}
%\hspace{-0.15in}
\subfigure[Panther++]{\label{subfig:epsilon_structuresim}
\includegraphics[width=0.23\textwidth]{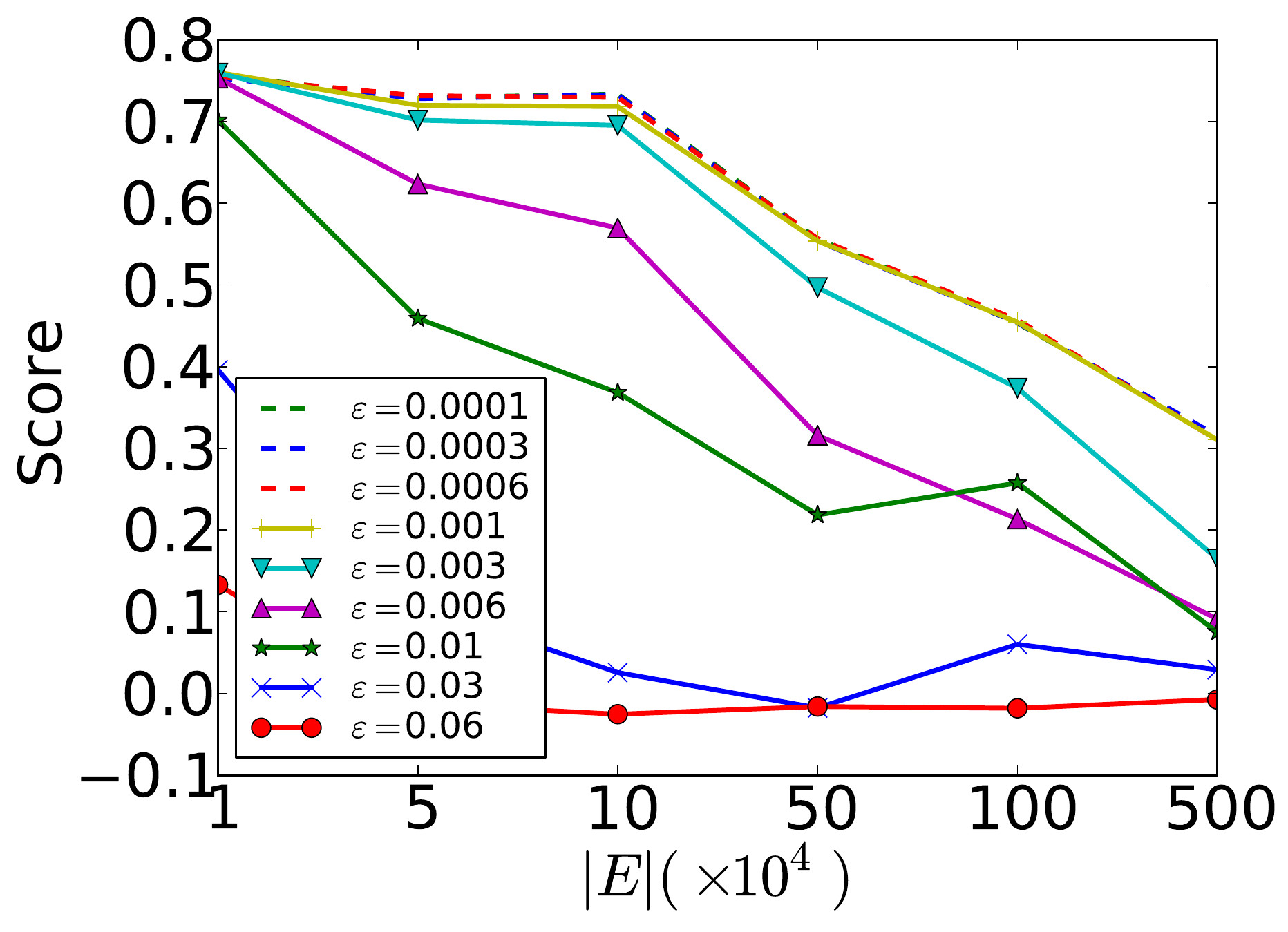}
}
}
%\vspace{-0.15in}
\caption{\label{fig:Epsilon} Effect of error-bound $\varepsilon$ on the performance of Panther and Panther++ on different sizes of Tencent networks.}
\end{figure}

\subsection{Qualitative Case study}

\begin{figure}[t]
\centering
\includegraphics[width=0.43\textwidth]{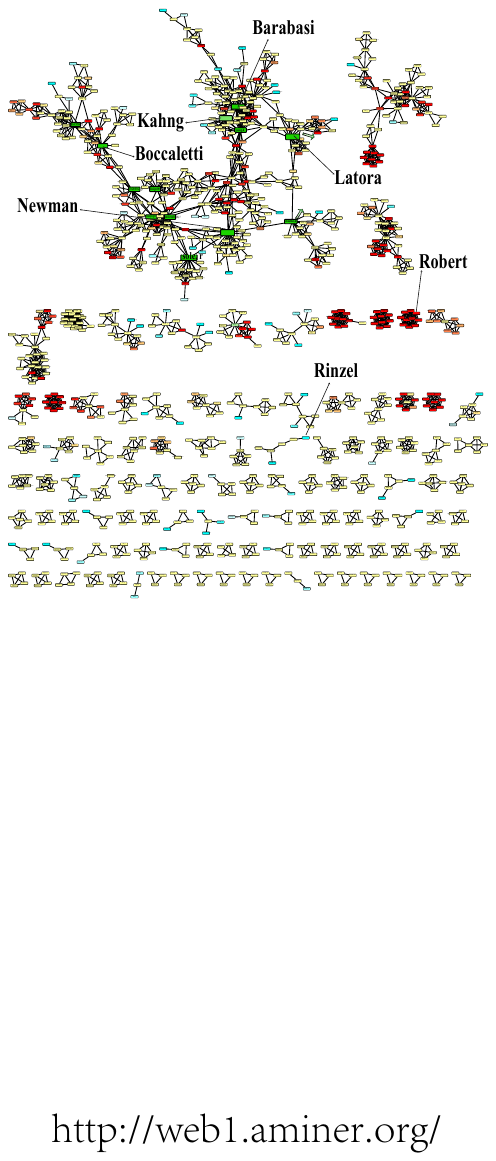}
%\vspace{-0.1in}
\caption{\label{fig:casestudy} Case study in a scientific co-author network~\cite{newman:2006}. The authors in similar positions to that of Barab\'{a}si are denoted in green,  similar to that of Robert are in red, and similar to that of Rinzel are in blue. Others are in yellow.}
\end{figure}

%\vpara{co-author network.}
%We employ Panther and Panther++ in the co-author networks, specifically KDD, ICML and SIGIR networks. 
Now we present two case studies to demonstrate the effectiveness of the proposed methods.

\vpara{``Similar Researchers''}
Table~\ref{tb:casestudy} shows an example of top-5 similar authors  to Jiawei Han, Michael I. Jordan, and W. Bruce Croft, found by Panther and Panther++. 
%As can be seen from the results, the 
The two methods present very different results.
Those authors found by Panther have closer connections with the query author. 
%As expected, the people 
While those authors found by Panther++ 
have a similar ``social status'' (essentially similar structural patterns) to the query author.
%are those who have high impacts in the domain thus share the same structures. 
For example, Philip S. Yu and Christos Faloutsos are two researchers as famous as Jiawei Han in the data mining field (KDD). Andrew Y. Ng and  Bernhard Scholkopft are  influential researchers similar to Michael I. Jordan in the machine learning field (ICML). 
%Leif Azzopardi and Maarten de Rijke are both experts similar as W.Bruce Croft in SIGIR domain.  

\begin{table*}

\centering
\caption{\label{tb:casestudy}  Case study of top-5 similar authors in KDD, ICML and SIGIR networks.}
\begin{tabular}{ c|c|c|c|c|c}
\hline
\multicolumn{2}{c|}{  \textbf{Jiawei Han} } & \multicolumn{2}{c|}{  \textbf{Michael I. Jordan} }& \multicolumn{2}{c}{\textbf{W. Bruce Croft} }  \\ \hline

Panther      & Panther++              & Panther                  & Panther++            &  Panther               & Panther++      \\ \hline
Chi Wang       & Philip S. Yu                & Eric p. Xing               & Andrew Y. Ng            &  Michael Bendersky  & Leif Azzopardi  \\
Jing Gao        &  Christos Faloutsos    & Percy Liang               & Bernhard Scholkopf   & Trevor Strohman      & Maarten de Rijke  \\
Xifeng Yan    & Jeping Ye                    & Lester W. Mackey      & Zoubin Ghahramani   & Jangwon Seo            & Zheng Chen  \\
YiZhou Sun   & Naren Ramakrishnan  & Gert R. G. Lanckriet   & Michael l. Littman     & Donald Metzler        & Ryen w. White  \\
Philip S. Yu    & Ravi Kumar                & Purnamrita Sarkar      & Thomas G. Dietterich& Jiwoon Jeon             & Chengxiang Zhai  \\

  \hline
\end{tabular}
\end{table*}

\vpara{``Who is similar to Barab\'{a}si?''}
Albert-L\'{a}szl\'{o} Barab\'{a}si is a famous Hungarian-American physicist, who proposed the Barab\'{a}si--Albert (BA) model for generating random scale-free networks using a preferential attachment mechanism.
%\vpara{Scientific network.}
We apply Panther++ to a scientific network~\cite{henderson:KDD12, newman:2006} to find 
 researchers who have similar structural positions to that of Dr. Barab\'{a}si.
It is interesting that different researchers play different roles in the network. 
Mark Newman and Vito Latora have similar structural patterns to that of Dr. Barab\'{a}si. Some other researchers like Robert form a tight-knit group with him.
%authors with similar structural position for three target authors: Newman, Robert, and Rinzel. The primary roles of the three authors are different. Newman acts as a broker; Robert places him in a tight-knit group, and Rinzel is in the periphery place. In Figure~\ref{fig:casestudy}, we present the names of the three target authors. We also denote the names of several authors in the center places. The authors similar as Newman are denoted in green, similar as Robert are in red. and similar as Rinzel are in blue. Other authors are in yellow. It can be clearly seen from the case study that 
Panther++ successfully recognizes those researchers with similar structural positions. 
%The similar results returned by ReFex are shown in \cite{henderson:KDD12}.

\section{Related Work}
\label{sec:related}

Early similarity measures, including bibliographical coupling~\cite{Kessler:63} and co-citation~\cite{Small:73} are based on the assumption that two vertices are similar if they have many common neighbors. 
This category of methods cannot estimate similarity between vertices without common neighbors.
%A direct consequence is that two nodes with no common neighbors will be treated as not similar at all. Thus, s
Several measures have been proposed to address this problem. For example, Katz~\cite{katz:1953} counts two vertices as similar if there are more and shorter paths between them. Tsourakakis et al.~\cite{tsourakakis2014toward} learn a low-dimension vector for each vertex  from the adjacent matrix and calculate similarities between the vectors. 
Jeh and Widom~\cite{jeh:KDD06} propose a new algorithm, SimRank.
%, which not only considers common neighbors but also look for neighbors that are similar. They 
The algorithm follows a basic recursive intuition that two nodes are similar if they are referenced by similar nodes.
VertexSim~\cite{newman:2006} is an extension of SimRank.
% are refined versions of SimRank. They deal with the drawback of SimRank that when two vertices have same neighbors in common, their SimRank score will go down with the increase of the number of neighbors.
%Leicht et al.~\cite{newman:2006} develop an asymmetrical version of SimRank named vertex similarity. It is based on the assumption that two vertices are similar if any pair of their neighbors are similar. 
However, all the SimRank-based methods share a common drawback: their computational complexities are too high. 
%For example, SimRank requires $O(IN^2d^2)$ time and $O(N^2)$ space, where $I$ is the number of iterations, $N$ is the number of vertices and $d$ is the average degree over all vertices. 
Further studies have been done to reduce the computational complexity of SimRank~\cite{He:2010,kusumoto:2014scalable,lee:2012top}.  Fast-random-walk-based graph similarity, such as in~\cite{Fujiwara:2013, Sarkar:2010}, has also been studied recently. 
Sun et al.~\cite{sun2011pathsim} measure similarities between vertices based on their inter-paths instantiated from different schemes defined in a heterogeneous information network. The setting is different from ours and the algorithm is not efficient.

%Although SimRank is argued to measure structure similarity, the two vertices can not far away from each other. 

Most aforementioned methods cannot handle similarity estimation across different networks.
Blondel et al.~\cite{blondel2004measure} provide a HITS-based recursive method to measure similarity between vertices across two different graphs. RoleSim~\cite{jin:KDD11} can also calculate the similarity between disconnected vertices.  Similar to SimRank, the computational complexity of the two methods is very high.
Feature-based methods can match vertices with similar structures. 
%The basic idea is to define several features for each vertex and then calculate the Euclidean distance between feature vectors of two vertices as their structural similarity. 
%The survey about feature based methods can be refer to~\cite{rossi2014role}.
For example,
Burt~\cite{burt1990detecting} counts the 36 kinds of triangles in one's ego network to represent a vertex's structural characteristic. In the same way, vertex centrality, closeness centrality, and betweenness centrality~\cite{freeman1977set} of two different vertices can be compared, to produce a structural similarity measure.
% However, any of the above metrics is too limited to explain the structure characteristic of a vertex.
Aoyama et al.~\cite{aoyama2011fast} present a fast method to estimate similarity search between objects, instead of vertices in networks.
ReFex~\cite{henderson:KDD11,henderson:KDD12} defines basic features such as degree, the number of within/out-egonet edges,  and define the aggregated values of these features over neighbors as recursive features. The computational complexity of ReFex depends on the recursive times. 
More references about feature-based similarity search in networks can be found in the  survey~\cite{rossi2014role}.
%Although they use pruning to reduce the complexity, no theoretical proof is given to show how many recursive times is enough. 

\section{Conclusion}
\label{sec:conclusion}
\hide{
In this paper, we propose two methods PathSim and PathVec to correspondingly tackle the problems of neighborhood similarity and structure similarity. $PathSim$ is based on the simple assumption that ``two nodes are similar if they frequently appear on the same path'' and PathVec relies on the idea that ``vertices with similar topological characteristics will have similar distributions of tie strengths with other vertices''. In order to deal with the intractability of the calculation, we adopt a sampling based method to approximate the result. Theoretical justifications have been given for the use this approximation algorithm. Moreover, it shows that the number of random paths needed to achieve a desired approximation level is independent of the network size. The experiments on several different datasets demonstrate the effectiveness and efficiency of our proposed algorithms.
}

In this paper, we propose a sampling method to quickly estimate top-$k$ similarity search in large networks. 
% and accurately approximates the similarity between vertices. 
The algorithm is based on the idea of \textit{random path} and an extended method is also presented to enhance the structural similarity when two vertices are completely disconnected.
We provide theoretical proofs for the error-bound and confidence of the proposed algorithm.
We perform an extensive empirical study and show that our algorithm can obtain top-$k$ similar vertices for any vertex in a network approximately 300$\times$ faster than state-of-the-art methods.
We also use identity resolution and structural hole spanner finding, two important applications in social networks, to evaluate the accuracy of the estimated similarities. Our experimental results demonstrate that the proposed algorithm achieves clearly better performance than several alternative methods.
%\small

%\section*{ACKNOWLEDGMENTS}
\vpara{Acknowledgements.}
The authors thank Pei Lee, Laks V.S. Lakshmanan, Jeffrey Xu Yu; Ruoming Jin,  Victor E. Lee, Hui Xiong; Keith Henderson, Brian Gallagher, Lei Li, Leman Akoglu, Tina Eliassi-Rad, Christos Faloutsos for sharing codes of the comparation methods. We thank Tina Eliassi-Rad for sharing datasets.
\hide{
The work is supported by the
%jie's new 863 on social network
National High-tech R\&D Program (No. 2014AA015103),
%Maosong's  973
National Basic Research Program of China (No. 2014CB340500,
%Xiaofeng's 973
No. 2012CB316006),
%National High Technology Research and Development Program of China (No. 2014AA015103)
%jie's excellent young 
Natural Science Foundation of China (No. 61222212),
%jie's new one on social network mining
%No. 61073073, 
%2007 on semantic annotation
%, No. 60703059
%juanzi's
%No. 60973102,
%chunxiao's
%No. 61170061),

%shiqiang
%Chinese National Key Foundation Research (No. 60933013,
%juanzi's key
%No.61035004),
%a special fund for Fast Sharing of Science Paper in Net Era by CSTD, 
%chunxiao's 973
%National Basic Research Program of China (No. 2011CB302302),
%and a research fund of Tsinghua-Tencent Joint Laboratory for Internet Innovation Technology.
%slone
the Tsinghua University Initiative Scientific Research Program (20121088096),
a research fund supported by Huawei Inc.
and 
Beijing key lab of networked multimedia.

}
%自主科研? lab

%jie's 863
%National High-tech R\&D Program (No. 2009AA01Z138).
%tencent

%jie's
%and Chinese Young Faculty Research Fund (No. 20070003093).

%973
%No. 2007CB310803, 

%He Gao Ji
%2011ZX01042-001-002

\normalsize

%\clearpage

\bibliographystyle{abbrv}
%\small
\bibliography{references}  % sigproc.bib is the name of the Bibliography in this case

\end{document}